\documentclass[11pt,a4paper]{article}
\usepackage{fullpage}
\usepackage{listings}
\usepackage{amsthm}
\usepackage{amsmath}
\usepackage{algorithm}
\usepackage{wrapfig} 
\usepackage{url,hyperref}
\usepackage{graphicx}
\usepackage{array}
\usepackage{color}
\usepackage{float}
\usepackage{multirow}
\usepackage{smartref}
\usepackage[noadjust]{cite}

\newcommand{\ignore}[1]{}

\newcommand{\remove}[1]{}

%

%
%


\newcommand{\RedBlue}{{\sf RedBlue}}
\newcommand{\FRedBlue}{{\sf F-RedBlue}}
\newcommand{\SRedBlue}{{\sf S-RedBlue}}
\newcommand{\LSRedBlue}{{\sf LS-RedBlue}}
\newcommand{\BLSRedBlue}{{\sf BLS-RedBlue}}

\newcommand{\CX}{{\sf CX}}

\newcommand{\SIM}{{\sf Sim}} 
\newcommand{\PSIM}{{\sf P-Sim}} 

\newcommand{\LSIMOPT}{{\sf L-UC}} 




\newcommand{\CAS}{{\tt CAS}}
\newcommand{\LC}{{\tt LL/SC}}

\newcommand{\LL}{{\tt LL}}
\newcommand{\SC}{{\tt SC}}
\newcommand{\VL}{{\tt VL}}
\newcommand{\FAD}{{\tt Add}}

\newcommand{\APPLYOP}{{\sc ApplyOp}}

\newcommand{\READ}{{\tt Read}}

\newcommand{\TRUE}{{\sl true}}
\newcommand{\FALSE}{{\sl false}}

\newcommand{\ATTEMPT}{{\tt Attempt}}

\newtheorem{theorem}{Theorem}[section]

\newtheorem{lemma}[theorem]{Lemma}

\bibliographystyle{plainurl}

\title{An Efficient Universal Construction for Large Objects}

\date{}

\author{Panagiota Fatourou\\
	\small{Institute of Computer Science - Foundation for Research and Technology-Hellas (FORTH-ICS)}\\
    \small{\& Department of Computer Science, University of Crete, Greece}\\
    \textit{faturu@csd.uoc.gr}
    \and
    Nikolaos D. Kallimanis\\
    \small{Institute of Computer Science - Foundation for Research and Technology-Hellas (FORTH-ICS)}\\
    \textit{nkallima@ics.forth.gr}
    \and
    Eleni Kanellou\\
    \small{Institute of Computer Science - Foundation for Research and Technology-Hellas (FORTH-ICS)}\\
    \textit{kanelou@ics.forth.gr}}

\let\origthelstnumber\thelstnumber
\makeatletter
\newcommand*\Suppressnumber{%
  \lst@AddToHook{OnNewLine}{%
    \let\thelstnumber\relax%
     \advance\c@lstnumber-\@ne\relax%
    }%
}

\newcommand*\Reactivatenumber{%
  \lst@AddToHook{OnNewLine}{%
   \let\thelstnumber\origthelstnumber%
   \advance\c@lstnumber\@ne\relax}%
}

\begin{document}

\maketitle

\begin{abstract}
This paper presents \LSIMOPT, a universal construction that 
efficiently implements dynamic objects of large state in a wait-free manner. 
The step complexity of \LSIMOPT\ is $O(n+kw)$, 
where $n$ is the number of processes, 
$k$ is the interval contention (i.e., the maximum number of active processes 
during the execution interval of an operation), 
and $w$ is the worst-case time complexity 
to perform an operation on the sequential implementation of the simulated object. 
\LSIMOPT\ efficiently implements objects whose size can change dynamically.
It improves upon previous universal constructions 
either by efficiently handling objects whose state is large and can change dynamically,
or by achieving better step complexity.
\end{abstract}

\section{Introduction}
\label{sec:intro}
\subsection{Motivation and Contribution}
Multi-core processors are nowadays found in all computing devices. 
Concurrent data structures are frequently used as the means through
which processes communicate in multi-core contexts, 
thus it is important to have efficient and fault-tolerant implementations of them. 
A {\em universal construction}~\cite{H91,H93} provides an automatic mechanism 
to get a concurrent implementation of any data structure (or object) from its sequential implementation. 

In this paper, we present \LSIMOPT, an efficient, wait-free universal construction
that deals with dynamic objects whose state is large.
{\em Wait-freedom} ~\cite{H91} ensures that every process finishes the execution of
each operation it initiates within a finite number of steps. 
The step complexity of \LSIMOPT\ is $O(n+kw)$, 
where $n$ is the number of processes in the system, 
$k$ is the {\em interval contention}, i.e., the maximum number 
of processes that are active during the execution interval of an operation, 
and $w$ is the worst-case time complexity to perform an operation on the sequential
data structure. 
The step complexity of an algorithm is the maximum number of {\em shared memory accesses}
performed by a thread for applying any operation on the
simulated object in any execution.

A large number of the previously-presented universal constructions \cite{ADT95,AM95,CER10,FK09,FK11spaa,H91,H93}
work by copying the entire state of the simulated object locally, making the required updates 
on the local copy, and then trying to make the local copy shared by changing one (or a few) shared pointers
to point to it.
Copying the state of the object locally is however very inefficient when coping with large objects.
\LSIMOPT\ avoids copying the entire state of the simulated object locally; in contrast,
it applies the required changes directly on the shared state of the object.
For doing so, processes need to synchronize when applying the changes.
Previous universal constructions that apply changes directly to the shared data structure (e.g., \cite{CER10})
synchronize on the basis of each operation. However, this results in high synchronization cost. 
To reduce this cost, \LSIMOPT\ applies a wait-free analog of the combining technique~\cite{FK12,FK11spaa}:
each process simulates, in addition to its own operation, the operations of other active processes. 
So, in \LSIMOPT, processes have to pay the synchronization cost once for a batch of operations
and not for each distinct operation.

\SIM~\cite{FK11spaa, FK13} is a wait-free universal construction that implements the combining technique.
In \SIM, each process $p$ that wants to apply an operation,
first announces it in an $Announce$ array. 
Then, $p$ reads all other announced operations, makes a local copy of the shared state, 
applies all the operations it is aware of on this copy, and tries 
to update a shared variable to point to this local copy. 
\PSIM, the practical version of \SIM\ (presented also in~\cite{FK11spaa}) is
highly efficient for objects whose state is small. 
\LSIMOPT\ borrows some of the ideas presented in ~\cite{FK11spaa}.
Specifically, as \PSIM, \LSIMOPT\ uses an $Announce$ array 
in which processes announce their operations, and
employs bit vectors to figure out which processes have
active operations at each point in time. However, the bit vector mechanism 
of \LSIMOPT\ is more elaborated than  that of \PSIM, because the active processes 
have to agree on the set of operations that must be applied on the shared data structure
before they attempt to perform any changes.
In contrast to \SIM, \LSIMOPT\ avoids copying locally the object's state. 
This makes \LSIMOPT\ appropriate for simulating large objects.

\LSIMOPT\ also borrows some ideas from the universal construction presented in \cite{CER10} that copes with large objects.
As in the universal construction in~\cite{CER10}, in \LSIMOPT, each process uses a directory to store
copies of the shared variables (e.g.,  the shared nodes) it accesses while executing operations on the data structure.  
\LSIMOPT\ combines this idea with the idea of implementing a wait-free analog of the combining technique.
This way, \LSIMOPT\ achieves step complexity that is $O(n+kw)$. In scenarios of low contention, this bound can be 
much smaller than the $O(nw)$ achieved by the universal construction in~\cite{CER10}.
Moreover, the universal construction in~\cite{CER10} have processes synchronize on the basis
of every single operation, whereas in \LSIMOPT, processes synchronize once 
to execute a whole batch of operations.

\subsection{Related Work}

In~\cite{H91}, Herlihy studied how shared objects can be simulated,  
in a wait-free manner, using read-write registers and consensus objects. 
In the proposed universal construction, the simulated object is represented by a list of records.
Each record stores information about an operation $op$ 
(its type, its arguments, and its response) that has been performed on the simulated object. 
It also stores the state of the simulated object after all
operations inserted in the list up until $op$ (including it) have been applied on the implemented object 
in the order that they have been inserted in the list. 
To agree on which record will be inserted in the list next, 
each record additionally stores an $n$-consensus object.
To ensure wait-freedom, the algorithm also employs an
announce array of $n$ elements, where the $n$ threads running in the system 
announce their operations, and stores a (strictly increasing) sequence number in each record,
which illustrates the order in which this record was inserted in the list. 
Threads help the record of a thread $i$ to be inserted as the $j$-th record 
in the list when $i = j \mbox{ {\tt mod }} n$. 
The step complexity of the algorithm is $O(n^2)$.
The space overhead of the algorithm is $O(n^3)$ and each register contains
the entire state of the object and a sequence number growing infinitely large.
Herlihy revisited wait-free simulation of objects in~\cite{H93}, where it presented 
a universal construction which uses \LC\ and \CAS\ objects
and achieves step complexity $O(n+s)$, where $s$ is the total size of the simulated object. 
These algorithms~\cite{H91,H93} are inappropriate for large objects, 
as they work by copying the entire state of the object locally.

Afek, Dauber and Touitou presented in~\cite{ADT95} a universal construction that employs a 
tree structure to monitor which processes are {\em active}, i.e. which processes are performing 
an operation on the simulated object at a given time. This tree technique was combined 
with some of the techniques proposed in~\cite{H91,H93} in order to get a universal construction for simulating large objects,
which has step complexity $O(kw\log w)$.

Anderson and Moir presented in~\cite{AM99} a wait-free universal construction for simulating
large objects. In their algorithms, a contiguous array is used to represent
the state of the object. Specifically, the object state is stored in $B$ data blocks of size $S$ each. 
To restrict memory overhead, the algorithms operate under the following assumptions: 
each operation can modify at most $T$ blocks and each thread can help at most
$M \geq 2T$ other threads. The step complexity of the universal construction
in~\cite{AM99} is $O((n/\min\{k,M/T\})\,$ $(B +M S+nw))$.  

In~\cite{FK09}, Fatourou and Kallimanis presented the family of \RedBlue\ adaptive 
universal constructions. The \FRedBlue\ algorithm achieves $O(min\{k, log n\})$ 
step complexity and uses $O(n^2+s$) \LC\ registers. 
However, \FRedBlue\ uses large registers and it is not able to simulate objects 
whose state is stored in more than one register. \SRedBlue\ uses small registers, but 
the application of an operation requires to copy the entire state of the 
simulated object and thus it is inefficient for large objects. 
\LSRedBlue\ and \BLSRedBlue\ improve the step complexity of the algorithms 
presented by Anderson and Moir in~\cite{AM99} for large objects.

In~\cite{correia2019waitfree}, Felber et al. present \CX, a wait-free universal 
construction, suitable for simulating large objects.
This universal construction keeps up to $2n$ instances of the object state.
In order to perform an update on the shared object, a process first appends
its request in a shared request queue and then attempts to obtain the lock of some of the 
object instances. We remark that each such object instance stores a pointer to a queue node.
Subsequently, the process uses this pointer to produce a valid copy of the object by performing 
all operations that were contained in the shared queue starting from the pointed node. 
Notice that \CX\ has space complexity $O(ns)$, where $n$ is the 
number of processes and $s$ is the total size of the simulated object.

\subsection{Roadmap}
The rest of this paper is organized as follows. Our model is discussed in 
Section \ref{sec:model}. \LSIMOPT\ is presented in Section~\ref{sec:description_lsimopt}.
Section~\ref{overview} provides an overview of the way the algorithm works and its pseudocode.
Section~\ref{detailed description} presents a detailed description of \LSIMOPT. 
A discussion of its complexity is provided in Section~\ref{step complexity}
and a sketch of proof for its correctness in Section~\ref{proof:lsimopt}.

\section{Model}
\label{sec:model}
We consider an  {\em asynchronous} system of $n$ {\em processes}, $p_1, \ldots,$
$p_n$, each of which may fail by {\em crashing}.
Threads communicate by accessing (shared) base objects. Each {\em base object} stores 
a value and supports some primitives in order to 
access its state.
An {\em \LC} object supports the atomic primitives
\LL\ and \SC. \LL($O$) returns the value that is stored into $O$. 
The execution of \SC$(O,v)$ by a thread $p_i$, $1 \leq i \leq n$,  
must follow the execution of \LL($O$) by $p$, and changes the contents of $O$ to $v$
if $O$ has not changed since the execution of $p$'s latest \LL\ on $O$.
If \SC$(O,v)$ changes the value of $O$ to $v$, \TRUE\ is returned
and we say that the \SC\ is successful; otherwise, the value of $O$ does 
not change, \FALSE\ is returned and we say that the \SC\ is not successful
or it is failed.\LSIMOPT\ is presented using \LC\ objects (as is the case for \SIM~\cite{FK11spaa, FK13}).
However,  
in a practical version of it, 
\LSIMOPT\ will be implemented using \CAS\ objects (as is the case for \PSIM~\cite{FK11spaa, FK13}). 
A \CAS\ object $O$ supports in addition to $\READ(O)$, the primitive
\CAS($O, u, v$) which stores $v$ to $O$ if the current value of $O$ is equal
to $u$ and returns \TRUE; otherwise the contents of $O$ remain unchanged and
\FALSE\ is returned.

A {\em universal} construction can be used to implement any shared object. 
A  universal construction supports the \APPLYOP($req$, $i$) operation,
which applies the operation (or {\em request}) $req$ to the simulated object and 
returns the return value of $req$ to the calling thread $p_i$.
In this paper, the concepts of an operation and a request 
have the same meaning and are used interchangeably.
A {\em universal construction} 
provides a routine, for each process, to implement \APPLYOP.

An object $O$ is {\em linearizable}, if in every execution $\alpha$,
it is possible to assign to each completed operation $op$ (and to some
of the uncompleted operations), 
a point $*_{op}$, called the {\em linearization point} of $op$, 
such that: $*_{op}$ follows the invocation and precedes the response of $op$, 
and the response returned by $op$ is the same as the response $op$
would return if all operations in $\alpha$ were executed sequentially in the
order imposed by the linearization points.

A {\em configuration} is a vector that contains the values of the base 
objects and the states of the processes, and describes the system at some point in time. 
At the {\em initial configuration}, processes are in their initial state and the base objects
contain initial values. A {\em step} is taken by some process whenever
the process executes a primitive on a shared register; the step may also include
some local computation that is performed before the execution of the primitive. 
An {\em execution} is a sequence of steps. The {\em interval contention} of an instance of some operation in an 
execution is the number of processes that are active during the execution of this instance. 
The {\em step complexity} of an operation is the maximum number of
steps that any thread performs during the execution of any instance of the operation in any execution.
{\em Wait-freedom}
guarantees that every process finishes each operation it executes in a finite number 
of steps.

\section{The L-UC Algorithm}
\label{sec:description_lsimopt}
This section presents \LSIMOPT, our wait-free universal construction for large objects.

\subsection{Overview}
\label{overview}

\begin{lstlisting} [float=!ht,xleftmargin=.04\textwidth,numbers=left,basicstyle=\scriptsize,language=c,breaklines=true,numberblanklines=false,caption={Data structures used in \LSIMOPT\ and pseudocode for \LSIMAP.},label={alg:lsimopt_ds},escapechar=@,name=lsimopt]
struct NewVar {    // node of list of newly allocated variables
    ItemSV *var;   // points to the ItemSV struct of the variable
    NewVar *next;  // points to the next element of the list
};

struct NewList {   
    ItemSV *first;
};

struct State {    
   boolean applied[1..n];
   boolean papplied[1..n];
   int seq;
   NewList *var_list;  
   RetVal RVals[1..n]; // return values
};

struct DirectoryNode {
   Name name;      // variable name
   ItemSV *sv;     // data item for the variable
   Value val;      // value of the data item
};

struct ItemSV {  // data item for a variable
   Value val[0..1];// old and new values of data item
   int toggle;     // toggle shows the current value of data item
   int seq;
};

// Toggles is implemented as an integer of @$n$@ bits; if @$n$@ is big, more than one such integers can be used
shared Integer Toggles = @$<0, ..., 0>$@; 
shared State S = @$<F,...,F>, <F,...,F>, 0, <\bot>, <\bot, ...,\bot>>$@;
shared OpType Announce[1..n] = {@$\bot$@, ..., @$\bot$@};

// Private local variable for process @$p_i$@
Integer @$toggle_i$@ = @$2^i$@;

RetVal ApplyOp(request req){  // Pseudocode for process @$p_i$@
    Announce[i] = req;        // Announce request @$req$@ @\label{alg:lsimopt:announce_op}@
    @$toggle_i$@ = -@$toggle_i$@; @\label{alg:lsimopt:toggle_toggle}@
    @\FAD@(Toggles, @$toggle_i$@);       // toggle @$p_i$@'s bit by adding @$2^i$@ to Toggles @\label{alg:lsimopt:first_add}@
    @\ATTEMPT@();      // call @\ATTEMPT@ twice to ensure that req will be performed@\label{alg:lsimopt:first_attempt}@
    @\ATTEMPT@();                 @\label{alg:lsimopt:second_attempt}@
    return S.rvals[i];        // @$p_i$@ finds its return value into @$S.rvals[i]$@
}
\end{lstlisting}

\begin{lstlisting} [float=!ht,xleftmargin=.0\textwidth,numbers=left,basicstyle=\scriptsize,language=c,breaklines=true,numberblanklines=false,caption={Pseudocode for \LSIMOPT.},
label={alg:lsimopt},escapechar=@,name=lsimopt-1, postbreak=\/\/\space, breakautoindent=true, breakindent=150pt, breaklines]
void Attempt(Request req) {              // pseudocode for process @$p_i$@
  ProcessIndex q, j;
  State ls, tmp;
  Set lact;
  Directory @$D$@;
  NewVar *pvar = new NewVar(), *ltop;
  ItemSV sv, *psv = new ItemSV();
  
  psv@$\rightarrow \langle$@val, toggle, seq@$\rangle$@ = @$<<\bot, \bot>,0,0>$@;
  pvar@$\rightarrow \langle$@var, next@$\rangle$@ = <psv, null>; 
  for j=1 to 2 do { @\label{alg:lsimopt:attempt_loop}@
     D = @$\emptyset$@;                                // initialize direcory D@\label{alg:lsimopt:dir_init}@
     ls = @\LL@(S);                           // create a local copy of @$S$@ @\label{alg:lsimopt:ll_iteration}@
     lact = Toggles;                       // read active set@\label{alg:lsimopt:read_toggles}@
     ltop = ls.var_list@$\rightarrow$@first; // read pointer to the list of newly-allocated variables@\label{alg:lsimopt:init_list}@
     tmp.seq = ls.seq + 1;                               @\label{alg:lsimopt:tmp_inc}@
     tmp.papplied[1..n] = ls.applied[1..n];              @\label{alg:lsimopt:s_papplied}@
     tmp.applied[1..n] = lact[1..n];       // @$p_i$@ will later attempt to update S with tmp, so it sets the fields of tmp appropriately@\label{alg:lsimopt:s_applied}@
     tmp.rvals[1...n] = ls.rvals[1..n];@\label{alg:lsimopt:copy_rvals}@
     for q=1 to n do {                     @\label{alg:lsimopt:for_loop}@
        if (ls.applied[q] @$\neq$@ ls.papplied[q]) { // q's request is pending@\label{alg:lsimopt:if_apply}@
           foreach access of a variable x while applying request Announce[q]{@\label{alg:lsimopt:foreach_access}@
              if (x is a newly allocated variable) {@\label{alg:lsimopt:alloc_var}@
                 if(@\CAS@(ltop@$\rightarrow$@next, null, pvar)){@\label{alg:lsimopt:add_list}@
                    psv = new ItemSV();
                    psv@$\rightarrow \langle$@val, toggle, seq @$\rangle$@ = @$<<\bot, \bot>,0,0>$@;
                    pvar = new NewVar();
                    pvar@$\rightarrow \langle$@var, next@$\rangle$@ = <psv, null>;
                 } @\Suppressnumber@
                 // use node pointed by @$ltop \rightarrow next$@ as the new variable's metadata@\Reactivatenumber@
                 ltop = ltop@$\rightarrow$@next;@\label{alg:lsimopt:new_var}@
                 add <x, ltop@$\rightarrow$@var, ltop@$\rightarrow$@var.val[0]> to D; @\label{alg:lsimopt:add_new_item_to_dir}@
              } else {                  // x is not a newly allocated variable 
                 let svp be a pointer to the ItemSV struct for x; 
                 if (this access is a read instruction) {  @\label{alg:lsimopt:perfom_read}\Suppressnumber@
                    // perform the request on the local copy of x (if any) @\Reactivatenumber@
                    if (x exists in D) read x from D;
                    else {
                       sv = @\LL@(*svp);@\label{alg:lsimopt:ll_dir}@
                       if (tmp.seq==sv.seq) add <x,svp,sv.val[1-sv.toggle]> to D;@\label{alg:lsimopt:add_dir1}@
                       else if(tmp.seq>sv.seq) add <x,svp,sv.val[sv.toggle]> to D;@\label{alg:lsimopt:add_dir2}@
                       else goto Line @\ref{alg:lsimopt:vl}@; // values read from @$S$@ by @$p_i$@ obsolete, so start from scratch@\label{alg:lsimopt:obsolute_on_read}@
                    }
                 } else if (the access is a write instruction) update x in D;@\label{alg:lsimopt:update_dir}@
              }
           }
           store into tmp.rvals[q] the return value;@\label{alg:lsimopt:calculate_return_value}@
        }
     }
     if (!@\VL@(S)) continue; // value read in @$S$@ by @$p_i$@ is obsolete, so start from scratch @\label{alg:lsimopt:vl}@
     foreach record <x, svp, v> in D {@\label{alg:lsimopt:flush_dir}@
        if(svp@$\rightarrow$@seq > tmp.seq) break;  // all requests have been applied, so leave the loop @\label{alg:lsimopt:for_break}@
        else if(svp@$\rightarrow$@seq == tmp.seq) continue; // the variable has been modified, so continue @\label{alg:lsimopt:sc_dir1}@
        else if(svp@$\rightarrow$@toggle == 0) SC(*svp, @$<<$@svp@$\rightarrow$@val[0],v>, 1, tmp.seq>); @\label{alg:lsimopt:sc_dir2}@
        else SC(*svp, @$<<$@v, svp@$\rightarrow$@val[1]>, 0, tmp.seq>); // make update visible@\label{alg:lsimopt:sc_dir3}@
     }
     tmp.var_list = new List();  tmp.var_list@$\rightarrow$@first = null; @\label{alg:lsimopt:new_list}@
     @\SC@(S, tmp);                       // try to modify S @\label{alg:lsimopt:sc_on_s}@
  }
}
\end{lstlisting}

The pseudocode for \LSIMOPT\ is provided in Listings~\ref{alg:lsimopt_ds} and~\ref{alg:lsimopt}.
The state of the simulated data structure in \LSIMOPT\ is shared and it can be updated directly by any process. 
Each process $p$ that wants to apply a request,
first announces it in an $Announce$ array. 
In addition to the $Announce$ array, 
\LSIMOPT\ uses a bit vector $Toggles$ of $n$ bits, one for each process. 
A process $p_i$ toggles its bit, $Toggles[i]$, after announcing a new request.
The use of $Toggles$ implements a fast mechanism for informing other processes of those processes that 
have pending requests.

Each execution of \LSIMOPT\ can be partitioned into phases.
In each phase $i \geq 1$, the set of requests that will be executed in the next phase
is agreed upon by the processes that are active.
Moreover, those requests that have been agreed upon in the previous phase are indeed executed.

A process $p_i$ that wants to execute a new request, it first announces it in $Announce$,
and then it toggles its bit in $Toggles$. Afterwards, it calls a function, called \ATTEMPT, twice: 
After the execution of the first instance of \ATTEMPT\ by $p_i$, it is ensured
that the set of requests agreed upon in one of the phases that overlap
the execution of the \ATTEMPT, contains $p_i$'s request. After the execution of the
second instance of \ATTEMPT\ by $p_i$, it is ensured that 
$p_i$'s request has been applied. 

\LSIMOPT\  uses an \LC\ object $S$ which stores appropriate fields
to ensure the required synchronization between the processes in each phase. 
The first phase  (phase 1) starts at the initial configuration and ends when the
first successful \SC\ is applied on $S$. Phase $i > 1$ starts when 
phase $i-1$ finishes and ends when the $i$-th successful \SC\ is applied on $S$.

To decide which set of requests will be executed in each phase, 
$S$ contains two bit vectors, called $applied$ and $papplied$, of $n$ bits each
(one for each process). The current request initiated by a process $p_i$ 
has not yet been applied, if $S.applied[i] \neq S.papplied[i]$.
When this condition holds, we call the current request of process $p_i$ {\em pending}.

In each instance of \ATTEMPT, $p_i$ 
copies the value of $S$ in a local variable $ls$ (line~\ref{alg:lsimopt:ll_iteration}), 
records necessary changes that it makes to its fields
in another local variable $tmp$ 
(lines~\ref{alg:lsimopt:tmp_inc}-\ref{alg:lsimopt:copy_rvals}, \ref{alg:lsimopt:calculate_return_value}, \ref{alg:lsimopt:new_list}), 
and uses \SC\ in an effort to update $S$ to the value contained in $tmp$ (line~\ref{alg:lsimopt:sc_on_s}).
Specifically, $p_i$ reads $S$ on line~\ref{alg:lsimopt:ll_iteration} (by performing an \LL)
and $Toggles$ on line~\ref{alg:lsimopt:read_toggles}. It then copies $S.applied$ into 
$tmp.papplied$ (line~\ref{alg:lsimopt:s_papplied}) and $Toggles$ into $tmp.applied$ (line~\ref{alg:lsimopt:s_applied}).
Recall that the $applied$ and $papplied$ fields of $S$ 
encode the requests that are to be performed in each phase.
So, if the \SC\ that $p_i$ performs on line~\ref{alg:lsimopt:sc_on_s} succeeds, 
all processes that will read the value this \SC\ will write to $S$, 
will attempt to perform the requests encoded by $p_i$ in those fields.

Next, for each $j$, $1 \leq j \neq n$, $p_i$ checks 
whether $ls.applied[j] \neq ls.papplied[j]$  (lines~\ref{alg:lsimopt:for_loop}-\ref{alg:lsimopt:if_apply}), 
and if this is so, it applies the request recorded in $Announce[j]$.
To execute the pending requests recorded in $S$,
a process $p_i$ uses a caching mechanism as 
in~\cite{B93, CER10}: 
When a process first accesses a shared variable 
(e.g., a variable of the simulated shared data structure),
it maintains a copy of it in a directory, $D$ (which is local to $p_i$).
For each pending request recorded in $S$, 
the required updates are first performed by 
$p_i$ in the local copies of the data items that are residing in the directory
(lines \ref{alg:lsimopt:foreach_access}-\ref{alg:lsimopt:calculate_return_value}).
Read requests executed by $p_i$ are also served using $D$.
Only after 
it has finished the simulation of all pending requests, 
$p_i$ applies the changes listed in the elements of its directory to the shared data structure 
(lines~\ref{alg:lsimopt:flush_dir}-\ref{alg:lsimopt:sc_dir3}).

For each data item $x$ of the simulated object's state, \LSIMOPT\ maintains a 
record (struct) of type $ItemSV$. 
This struct stores the old and the current value of the data item in an array $val$
of two elements, a toggle bit that identifies the position in the $val$ array 
from where the current value for $x$ should be read, and a sequence number 
that is used for synchronization.

Note that $S$ contains also a field $seq$ that is incremented every time a successful \SC\ on $S$ is performed.
This field identifies the current phase of the execution.
Before performing an update on the shared data structure (lines~\ref{alg:lsimopt:flush_dir}-\ref{alg:lsimopt:sc_dir3}), 
$p_i$ validates the values of the $seq$ field read in $S$ ($tmp.seq$) and that stored in $ItemSV$ for $x$ ($svp \rightarrow seq$). 
Only if $svp \rightarrow seq < tmp.seq$ (line~\ref{alg:lsimopt:sc_dir3}), the update is performed since otherwise
it is already obsolete, i.e., $S.seq$ is already greater than 
$tmp.seq$ and therefore the \SC\ of line~\ref{alg:lsimopt:sc_on_s} by $p_i$ will fail.

Both the old and the current values of $x$ must be stored  in $ItemSV$
in order to avoid the following bad scenario. Consider two processes $p_i$ and $p_j$ that simulate the 
same request $req$. Assume that $p_i$ is  ready to execute
line~\ref{alg:lsimopt:ll_dir} for some variable $x$, 
whereas $p_j$ has finished the simulation of $req$ (lines \ref{alg:lsimopt:flush_dir}-\ref{alg:lsimopt:sc_dir3}) 
and has started updating the shared data structure. Then, it might happen that $p_i$ 
reads the updated version for $x$ although it should have read the old version. 
For this reason, $p_j$ stores the old value (in addition to the new value)
in one of the entries of the $val$ array and appropriately updates the toggle bit 
to indicate which of the two values is the new one. 
If $p_i$ discovers that it is too slow
(line~\ref{alg:lsimopt:add_dir1}), it reads the old value for $x$ stored in the $1-toggle$ 
entry of its $val$ array. Notice that, to ensure wait-freedom, 
$p_i$ should continue executing $req$ (to cope with the case that $p_j$ 
fails before performing all the required updates to the shared data structure).

When a new data item $x$ is allocated while executing a set of requests, additional synchronization 
between the processes that execute this set of requests is required to avoid situations 
where several processes allocate, each, a different record for $x$.
We use a technique similar to that presented in~\cite{CER10} to 
ensure that all these processes use the same allocated ItemSV structure for $x$. 
Specifically, \LSIMOPT\ stores into $S$ a pointer (called $var\_list$) to a list of newly 
created data items shared by all processes that read this instance of $S$. Each time a process $p_i$ 
needs to allocate the $j$-th, $j \geq 1$, such data item, it tries to add a structure of type $NewVar$ 
as the $j$-th element of the list (line~\ref{alg:lsimopt:add_list}). If it does not succeed, some other
process has already done so, so $p$ uses this structure (by moving pointer $ltop$ to this element on
line~\ref{alg:lsimopt:init_list}, and by inserting $ltop \rightarrow var$ in its dictionary on
line~\ref{alg:lsimopt:add_new_item_to_dir}).

We remark that 
the fields of $ItemSV$ must be updated in an atomic way using \SC. This requires that registers
in the system store two words which is impractical. However, we can utilize single-word 
registers by using indirection. Indirection can also be used to implement $S$ using single-word registers.

\subsection{Detailed Description of \ATTEMPT}
\label{detailed description}
In the following, we detail the implementation of function \ATTEMPT, presented in Algorithm~\ref{alg:lsimopt}.
When \ATTEMPT\ is executed by some process $p_i$, $p_i$ executes two iterations (line~\ref{alg:lsimopt:attempt_loop}) of checking whether 
there are pending requests and of attempting to apply them, as follows. It initializes its local directory 
$D$ (line~\ref{alg:lsimopt:dir_init}), creates in $ls$ a local copy of the state of the simulated object 
(line~\ref{alg:lsimopt:ll_iteration}), and reads in $lact$ the value of $Toggles$ (line~\ref{alg:lsimopt:read_toggles}), 
thus obtaining a view of which processes have pending requests  at the current point in time
(i.e., calculating the set of pending requests).
Furthermore, it 
locally stores into $ltop$ a pointer to the current variable list of the simulated object (line~\ref{alg:lsimopt:init_list}). 
Recall that the state of the object is copied into local variable $ls$ using an \LL\ primitive. 
In case this instance of \ATTEMPT\ is successful in applying the pending requests, it will update the shared state of 
the system using an \SC\ primitive. For this purpose, the local variable $tmp$ is prepared in lines~\ref{alg:lsimopt:tmp_inc}~to~\ref{alg:lsimopt:s_applied}, 
to serve as the value that will be stored into the shared state in case of success. 

After having read the state of the simulated object, as well as the state of the requests of the other processes, 
$p_i$ can detect which requests are pending. For this purpose, it iterates through the (locally stored) state 
of each process (line~\ref{alg:lsimopt:for_loop}) and checks whether the  values of $papplied$ and $applied$ differ for this process 
(line~\ref{alg:lsimopt:if_apply}). If so, the request of this process was still pending when \ATTEMPT\ read the 
value of $Toggles$ and therefore, \ATTEMPT\ intents to apply it. Notice that the iteration through the $papplied$ 
and $applied$ values consist of local steps. Notice also that at most $k$ out of $n$ processes have active 
requests, meaning what the request application contributes to step complexity depends on $k$ rather than $n$. 

We remark that the request of a process is expressed as a piece of sequential code. Therefore, 
in order to apply the request of some process, an instance of \ATTEMPT\ has to run through the sequential code 
of this request and carry out the variable accesses that this request entails, i.e. \ATTEMPT\ has to apply the 
modifications that this request incurs on the simulated object's variables (line~\ref{alg:lsimopt:foreach_access}). 
We distinguish three cases, namely the case where an access creates a new variable, the case where an access 
reads a variable, and the case where an access modifies an already existing variable. 

In the first case (line~\ref{alg:lsimopt:add_list}), the new variable, which was created and stored in local 
variable $pvar$, must be added to the shared list of variables of the simulated object. Recall that a pointer 
to the top of the variable list has been read by $p_i$ and stored in local variable $ltop$.
Recall also that all processes 
are trying to perform the announced requests in the same order. As with each instance of \ATTEMPT, so also the $p_i$ instance of
 \ATTEMPT\ tries to add $pvar$ 
to the top of the list using a \CAS\ primitive (line~\ref{alg:lsimopt:add_list}). In case this is successful, 
the metadata of this variable is initialized. In case the \CAS\ is unsuccessful, then some other process has 
updated the object's variable list since this instance of \ATTEMPT\ read it into $ltop$. 
Given that all processes follow the same order when trying to insert newly-allocated variables, the failure means that the variable has already been inserted in the shared list of variables of the simulated object. In either case, i.e. either successful or unsuccessful insertion by $p_i$,
$ltop$ is updated to point to the data item of the newly allocated variable.
Furthermore, the newly added variable is included into 
the local variable dictionary (line~\ref{alg:lsimopt:add_new_item_to_dir}).

In the second case (line~\ref{alg:lsimopt:perfom_read}), the access to be performed is a read to a variable 
of the simulated object. If \ATTEMPT\ already has a local copy of this variable in its dictionary, it reads 
the value from there. If no local copy is present in the dictionary (line~\ref{alg:lsimopt:ll_dir}), then 
\ATTEMPT\ reads the variable using an \LL\ primitive (line~\ref{alg:lsimopt:ll_dir}). At the same time, it 
checks the sequence number of the value that it has read, and in case this sequence number is less or equal 
to the local sequence number stored in $tmp$, then \ATTEMPT\ considers that it is reading a valid value. 
This value is then added to the local dictionary. However, in case the variable's sequence number is larger 
than the local sequence number, this hints that this instance of \ATTEMPT\ has been rendered obsolete by some other 
process that has already applied all requests that this instance of \ATTEMPT\ is applying. In order to find out if this 
is the case, \ATTEMPT\ verifies whether the state of $S$ has changed since it last read it (line~\ref{alg:lsimopt:vl})
and if so, it gives up the current iteration of the for loop of line~\ref{alg:lsimopt:attempt_loop}.

Finally, in the third case (line~\ref{alg:lsimopt:update_dir}), where the access is a write to an already 
existing variable.
In case that the accessed variable already exists in the local dictionary, 
the update on the local dictionary (line~\ref{alg:lsimopt:update_dir}), 
updates the variable's value stored in the local dictionary.
Otherwise, the update (line~\ref{alg:lsimopt:update_dir})
creates a new entry and stores the value of the variable.
Once the sequential code for the current request has all been run through and all variable accesses for 
the request have been performed, the request returns a return value, which is stored by \ATTEMPT\ for the 
process to access (line~\ref{alg:lsimopt:calculate_return_value}).

Recall that any update to a variable of the simulated object is performed locally by \ATTEMPT. 
Therefore, once all active requests have been applied, \ATTEMPT\ has to write back the local 
updates to the shared variables of the simulated object (lines~\ref{alg:lsimopt:flush_dir}~-~\ref{alg:lsimopt:sc_dir3}). 
Notice that once again, the sequence numbers of the local and shared copies are instrumental 
in detecting whether a variable has already been updated or not 
(lines~\ref{alg:lsimopt:sc_dir1}~-~\ref{alg:lsimopt:sc_dir3}).
More specifically, the condition of line~\ref{alg:lsimopt:sc_dir1} checks if another process has already
updated or not the value of the shared variable while trying to apply the same set of operations 
calculated in lines~\ref{alg:lsimopt:s_papplied}~-~\ref{alg:lsimopt:copy_rvals}.
In case that a process is very slow and the whole set of operations calculated in
lines~\ref{alg:lsimopt:s_papplied}~-~\ref{alg:lsimopt:copy_rvals} is applied, the condition of 
line~\ref{alg:lsimopt:sc_dir2} fails, and the process breaks the execution (line~\ref{alg:lsimopt:for_break})
of the \textit{for-loop} of lines~\ref{alg:lsimopt:flush_dir}~-~\ref{alg:lsimopt:sc_dir3}.
Finally, once the updates have been performed, \ATTEMPT\ tries to update $S$, before performing any remaining 
iteration of the for loop of line~\ref{alg:lsimopt:attempt_loop}.  

\subsection{Step Complexity}
\label{step complexity}
By inspection of the pseudocode of \APPLYOP, it becomes apparent that its step complexity
is determined by the step complexity of \ATTEMPT. In a practical version of \LSIMOPT\
where $S$ is implemented using indirection, lines~\ref{alg:lsimopt:ll_iteration} and~\ref{alg:lsimopt:read_toggles} 
contribute $O(n)$ to performance, since the size of the data records that are read is $O(n)$.
The body of the if statement of line~\ref{alg:lsimopt:if_apply} (i.e., lines~\ref{alg:lsimopt:foreach_access}-\ref{alg:lsimopt:update_dir}) 
is executed $O(k)$ times, each  time contributing a factor of $O(w)$ (because of the foreach statement of line~\ref{alg:lsimopt:foreach_access}). 
Note that searching an element in the dictionary, adding an element to it
or removing an element from it does not cause any shared memory accesses, 
i.e., it causes only local computation. So, the cost of executing lines~\ref{alg:lsimopt:alloc_var}-\ref{alg:lsimopt:calculate_return_value} is $O(1)$.
Note also that at most $O(kw)$ elements are contained in each dictionary. 
Therefore, the \textit{foreach} of line~\ref{alg:lsimopt:flush_dir} contributes $O(kw)$ to the total cost. The rest
of the code lines access only local variables and thus they do not contribute to the step complexity
of the algorithm.  We conclude that the step complexity of \APPLYOP\ is $O(n+kw)$.

\subsection{Correctness Proof}
\label{proof:lsimopt}

This section provides a sketch of the correctness proof of \LSIMOPT. We start with some useful notation.
Let $\alpha$ be any execution of \LSIMOPT\ and assume that some thread $p_i$, $i \in \{1, ..., n\}$,
executes $m_i > 0$ requests in $\alpha$.
Let $req_{j}^i$ be the argument of the $j$-th call of \LSIMOPT\ by $p_i$
and let $\pi_{j}^i$ be the $j$-th instance of \ATTEMPT\ executed by $p_i$ (Figure~\ref{fig:lsimopt_execution}).
Let $C_0$ be the initial configuration. 
Define as $Q_j^i$ the configuration after the execution of the \FAD\ instruction of line~\ref{alg:lsimopt:first_add}; let $Q_0^i = C_0$. 
We use $Toggles[i]$, $i \in \{1, \ldots, n\}$, to denote the $i$-th bit of $Toggles$,
and let $toggle_j^i$ be the value of $p_i$'s local variable $toggle_i$ at the end of $req_j^i$.

In the following lemma, we argue that during the execution of each of the two iterations of the for loop of 
line~\ref{alg:lsimopt:attempt_loop} of any instance of \ATTEMPT,  at least one successful \SC\ instruction 
is performed. 

\begin{lemma}
\label{lsimopt:lemma:two_sc}
Consider any $j$, $0 < j \leq m_i$.
There are at least two successful \SC\ instructions
in the execution interval of $\pi_j^i$.
\end{lemma}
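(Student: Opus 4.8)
The plan is to prove the statement by a per-iteration argument. Since $\pi_j^i$ executes the body of the \texttt{for} loop of line~\ref{alg:lsimopt:attempt_loop} exactly twice, it suffices to show that during \emph{each} execution of this body at least one successful \SC\ on $S$ occurs, and that the two \SC\ instructions obtained this way are necessarily distinct. Summing then gives the required two successful \SC\ instructions inside the execution interval of $\pi_j^i$.

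The single fact about \LC\ that I would isolate first is the following: a \VL($S$) or an \SC($S,\cdot$) executed by $p_i$ fails precisely when $S$ has been modified since $p_i$'s most recent \LL($S$); and since $S$ can be modified only by a successful \SC, a failed \VL\ or \SC\ by $p_i$ \emph{witnesses} a successful \SC($S,\cdot$) performed by some other process strictly after that \LL. (Note that $p_i$ itself performs no successful \SC\ on $S$ between its own \LL\ and the \VL/\SC\ under consideration, so the witnessing \SC\ is indeed by another process.)

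Next I would fix one iteration of the loop and let $t$ be the time of the \LL($S$) that $p_i$ performs on line~\ref{alg:lsimopt:ll_iteration} at its start. The iteration can terminate in only two ways: either $p_i$ executes the \SC($S,tmp$) of line~\ref{alg:lsimopt:sc_on_s}, or it takes the \texttt{continue} branch after the \VL($S$) of line~\ref{alg:lsimopt:vl} returns \FALSE\ (this \VL\ being reached either by falling through or via the \texttt{goto} of line~\ref{alg:lsimopt:obsolute_on_read}). A short case analysis finishes the iteration: if the \VL\ returns \FALSE, the fact above yields a successful \SC\ by another process after $t$; if instead $p_i$ reaches line~\ref{alg:lsimopt:sc_on_s}, then either its \SC\ succeeds, giving a successful \SC\ on $S$ directly, or it fails, in which case the fact above again yields a successful \SC\ by another process after $t$. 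In every branch there is a successful \SC\ on $S$ in the time window from $t$ to the terminating step of the iteration, and this window lies inside the execution interval of $\pi_j^i$.

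For distinctness, I would observe that the two windows are disjoint: the \SC\ charged to the first iteration occurs no later than the step that terminates that iteration, whereas the \SC\ charged to the second iteration occurs no earlier than the \LL($S$) of line~\ref{alg:lsimopt:ll_iteration} of the second iteration, which $p_i$ executes strictly after the first iteration has terminated. Hence the two successful \SC\ instructions are different events, both within the execution interval of $\pi_j^i$. I expect the only delicate point to be justifying that the above case split on how an iteration terminates is genuinely exhaustive — in particular, that the \texttt{goto} path does not create an escape from it. The key observation that dissolves this difficulty is that the \texttt{goto} merely transfers control to line~\ref{alg:lsimopt:vl}, so whatever the \VL\ there returns, control is handed to one of the two terminating options already covered; consequently no reasoning about the (possibly incomplete) contents of the directory $D$ on that path is needed for this lemma.
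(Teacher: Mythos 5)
Your proposal is correct and follows the same route the paper announces for this lemma: decompose $\pi_j^i$ into its two iterations of the loop of line~\ref{alg:lsimopt:attempt_loop} and show each contains a successful \SC\ on $S$, using the \LC\ semantics that a failed \VL\ or \SC\ witnesses an intervening successful \SC\ after the matching \LL. Your additional care about the exhaustiveness of the termination cases (including the \texttt{goto} path into line~\ref{alg:lsimopt:vl}) and about the disjointness of the two iteration windows is exactly what is needed to make the argument airtight.
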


We continue with two technical lemmas. The first argues that
the value of $p_i$'s bit in the $Toggles$ array is equal to $j \mod 2$
after the execution of the $j$-th \FAD\ instruction of line~\ref{alg:lsimopt:first_add} 
by $p_i$.
It also shows that
no process other than $p_i$ can change this bit.

\begin{lemma}
\label{lsimopt:col:toggles_intervals}
For each $j$, $0 \leq j \leq m_i$, it holds that
(1) $Toggles[i] = j \mod 2$ at $Q_j^i$, and
(2) $Toggles[i]$ has the same value between $Q_{j-1}^i$ and $Q_{j}^{i}$.
\end{lemma}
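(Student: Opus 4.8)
The plan is to prove a single strengthened invariant by induction over the atomic \FAD\ operations on $Toggles$, from which both parts follow simultaneously. Since \FAD\ (line~\ref{alg:lsimopt:first_add}) is the only primitive that modifies $Toggles$ and it is atomic, all modifications of $Toggles$ in $\alpha$ are totally ordered, and I would induct along this order. The invariant I would carry is: \emph{at every configuration $C$ of $\alpha$, and for every $l \in \{1, \ldots, n\}$, the bit $Toggles[l]$ equals the number of \FAD\ instructions completed by $p_l$ up to $C$, taken $\bmod\ 2$; moreover, each \FAD\ executed by a process $p_l$ changes only bit $l$ of $Toggles$ and no other bit.} The second clause is the decisive one for part (2): it asserts that carries and borrows never escape the owner's bit.

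First I would record the arithmetic fact that drives the argument. The local variable $toggle_i$ is initialized to a power of two and its sign is flipped once per request on line~\ref{alg:lsimopt:toggle_toggle}, so the value added by the $j$-th \FAD\ of $p_i$ has absolute value exactly $2^i$ and a sign that alternates with $j$. Combined with the second clause of the invariant (which forces $Toggles[i]$ to run through $0,1,0,1,\ldots$ across $p_i$'s successive \FADs), this means $p_i$ adds $+2^i$ precisely when $Toggles[i]=0$ and $-2^i$ precisely when $Toggles[i]=1$ immediately before the step, provided the initial sign of $toggle_i$ is chosen to agree with the initial bit value. The key sub-claim is then: adding $+2^l$ to an integer whose bit $l$ is $0$ sets bit $l$ to $1$ and produces no carry out of position $l$, while subtracting $2^l$ from an integer whose bit $l$ is $1$ sets bit $l$ to $0$ and produces no borrow out of position $l$; in both cases every other bit is unchanged. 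This is exactly where the alternating-sign mechanism pays off: a naive ``always add $2^l$'' scheme would propagate a carry whenever bit $l$ was already $1$, corrupting higher bits, whereas the sign chosen to match the current parity makes each \FAD\ a clean toggle of a single bit.

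With the sub-claim in hand the induction is routine. At $C_0$ we have $Toggles = \langle 0, \ldots, 0\rangle$ and every process has completed $0$ \FADs, so the invariant holds; since $Q_0^i = C_0$ this already gives part (1) for $j=0$. For the step, consider the next \FAD, executed by some $p_l$ as its $(c+1)$-th such instruction; by hypothesis $Toggles[l] = c \bmod 2$ just before, so the sign matches the parity, the sub-claim applies, bit $l$ flips to $(c+1)\bmod 2$, and all other bits are untouched. Both clauses are thereby preserved.

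Finally I would read off the two statements. For part (1): at $Q_j^i$, the configuration immediately after $p_i$ completes the \FAD\ of its $j$-th request, $p_i$ has completed exactly $j$ \FAD\ instructions, so by the invariant $Toggles[i] = j \bmod 2$. For part (2): in the interval strictly between $Q_{j-1}^i$ and $Q_j^i$ process $p_i$ performs no \FAD\ (its $(j-1)$-th \FAD\ defines the left endpoint and its $j$-th the right endpoint), while every \FAD\ performed by any other process in this interval changes only that process's own bit by the second clause; hence $Toggles[i]$ is constant throughout, equal to its value $(j-1)\bmod 2$ at $Q_{j-1}^i$. The main obstacle is the single-bit sub-claim together with verifying that the sign produced by the alternation of $toggle_i$ always matches the current value of $Toggles[i]$; once that is nailed down, the asynchronous interleaving of other processes' \FADs between $p_i$'s two steps is harmless precisely because each such step is a clean single-bit toggle.
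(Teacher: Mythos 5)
The paper states this lemma without proof (it appears in the ``sketch of proof'' section with only a one-line gloss), so there is nothing to compare against step by step; your argument is correct and is clearly the intended one: induct over the totally ordered \FAD\ steps on $Toggles$ with the strengthened invariant that each \FAD\ by $p_l$ cleanly toggles bit $l$ and leaves every other bit untouched, from which (1) and (2) both drop out. Your hedge about the initial sign of $toggle_i$ is warranted and worth keeping: as literally written (initialization to $2^i$ followed by negation before the first \FAD), the first \FAD\ adds $-2^i$ to $Toggles=0$, which in an $n$-bit two's-complement representation sets bits $i$ through $n-1$ and so would violate the single-bit-toggle sub-claim; the initialization must be $-2^i$ (or the negation moved after the \FAD) for the sign to track the parity of $Toggles[i]$, and with that fix the induction goes through exactly as you describe.
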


The next lemma studies the value of $S.applied[i]$ after the execution 
of the $j$-th instance of \ATTEMPT\  by $p_i$.

\begin{lemma}
\label{lsimopt:lemma:end_of_attempt}
Consider any execution $\pi_j^i$, $j > 0$, of function \ATTEMPT\ by some thread $p_i$.
$S.applied[i]$ is equal to $v = \lceil j/2 \rceil \mod 2$ just after the end of $\pi_{j}^i$.
\end{lemma}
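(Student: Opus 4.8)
The plan is to trace exactly how the field $S.applied[i]$ acquires its value and to pin that value down using the two preceding lemmas. First I would rewrite the target: writing $r = \lceil j/2 \rceil$ for the index of the request to which the attempt $\pi_j^i$ belongs (the two attempts $\pi_{2r-1}^i$ and $\pi_{2r}^i$ both serve request $req_r^i$), the claim becomes $S.applied[i] = r \bmod 2$ at the end of $\pi_j^i$. The only way $S.applied[i]$ ever changes is through a successful \SC\ on line~\ref{alg:lsimopt:sc_on_s}: such a \SC\ installs the local value $tmp$, and by lines~\ref{alg:lsimopt:read_toggles} and~\ref{alg:lsimopt:s_applied} the $i$-th entry $tmp.applied[i]$ is exactly the bit $Toggles[i]$ that the performing process read on line~\ref{alg:lsimopt:read_toggles}. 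Hence the value of $S.applied[i]$ at the end of $\pi_j^i$ is determined by the $Toggles[i]$-read of the last successful \SC\ occurring at or before the end of $\pi_j^i$.

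Second, I would locate that decisive \SC. Since $\pi_j^i$ is part of request $r$, it runs entirely after the \FAD\ of line~\ref{alg:lsimopt:first_add} for that request, i.e.\ after $Q_r^i$, and it completes before $p_i$ begins its next request, i.e.\ before $Q_{r+1}^i$ (when it exists). By Lemma~\ref{lsimopt:col:toggles_intervals}, together with the fact that only $p_i$ modifies $Toggles[i]$, the bit $Toggles[i]$ equals $r \bmod 2$ throughout this whole interval. By Lemma~\ref{lsimopt:lemma:two_sc}, the execution interval of $\pi_j^i$ contains at least two successful \SC\ instructions; let $sc^*$ be the last successful \SC\ at or before the end of $\pi_j^i$ (it lies inside $\pi_j^i$), and let $sc'$ be a successful \SC\ strictly preceding $sc^*$ inside $\pi_j^i$, which exists by Lemma~\ref{lsimopt:lemma:two_sc}.

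Finally --- and this is the step I expect to be the crux --- I would show that the process performing $sc^*$ read $Toggles$ after $Q_r^i$, so that it read $Toggles[i] = r \bmod 2$ and $sc^*$ set $S.applied[i] = r \bmod 2$. The difficulty is ruling out a slow process that performed its \LL\ of $S$ (line~\ref{alg:lsimopt:ll_iteration}) and its $Toggles$-read before $Q_r^i$ and only now completes a successful \SC; such a \SC\ would install the stale bit $(r-1) \bmod 2$. This is precisely what the two-\SC\ guarantee of Lemma~\ref{lsimopt:lemma:two_sc} excludes: for $sc^*$ to succeed, no successful \SC\ may fall between its performer's \LL\ of $S$ and its \SC\ on $S$; since $sc'$ is a successful \SC\ lying after $Q_r^i$ (being inside $\pi_j^i$) and before $sc^*$, the \LL\ of $sc^*$'s performer must follow $sc'$, hence follow $Q_r^i$, and therefore its subsequent $Toggles$-read (line~\ref{alg:lsimopt:read_toggles}) also follows $Q_r^i$ while still preceding $Q_{r+1}^i$. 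Combining this with the interval value of $Toggles[i]$ from Lemma~\ref{lsimopt:col:toggles_intervals} yields $tmp.applied[i] = r \bmod 2$ for $sc^*$; and since $sc^*$ is the last successful \SC\ at or before the end of $\pi_j^i$, we conclude $S.applied[i] = r \bmod 2 = \lceil j/2 \rceil \bmod 2$ there, as required.
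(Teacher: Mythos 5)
Your proof is correct. The paper itself states Lemma~\ref{lsimopt:lemma:end_of_attempt} without printing a proof, but your argument uses exactly the ingredients the surrounding text signals (Lemma~\ref{lsimopt:lemma:two_sc} and Lemma~\ref{lsimopt:col:toggles_intervals}) and the same mechanism the paper employs in its proof of Lemma~\ref{lsimopt:lemma:T_follows_Q}: the guarantee of two successful \SC\ instructions inside $\pi_j^i$ forces the matching \LL\ of the decisive successful \SC\ --- and hence its read of $Toggles$ on line~\ref{alg:lsimopt:read_toggles} --- to occur after $Q^i_{\lceil j/2\rceil}$, so that the bit written into $S.applied[i]$ is $\lceil j/2\rceil \bmod 2$.
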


For each $l > 0$, let $C_l$ be the configuration resulting after the execution 
of the $l$-th \FAD\ instruction in $\alpha$.
At $C_0$, $S.applied[i]$ is
equal to \FALSE. Lemma~\ref{lsimopt:lemma:end_of_attempt}
implies that just after $\pi_1^i$, $S.applied[i]$ is equal to \TRUE.
Let $C_1^i$ be the first configuration between $C_0$ and the end of
$\pi_1^i$ at which $S.applied[i]$ is equal to \TRUE.
Consider any request $req_j^i$, $j > 1$. 
Lemma \ref{lsimopt:lemma:end_of_attempt} implies
that just after $\pi_{2j-2}^i$, $S.applied[i]$ is equal to 
$\lceil (j-2)/2 \rceil \mod 2 = (j - 1) \mod 2$,
while just after $\pi_{2j-1}^i$, $S.applied[i]$ is equal to
 $\lceil (2j - 1)/2 \rceil \mod 2 = j \mod 2 \neq (j - 1)\mod 2$.
Let $C_{j}^i$ be the first configuration between the end of
$\pi_{2j-2}^i$ and the end of $\pi_{2j-1}^i$ such that $S.applied[i]$
is equal to $j\mod 2$. 
Figure~\ref{fig:lsimopt_execution} illustrates the above notation.

Since the value of $S.applied[i]$ can change only by the
execution of \SC\ instructions on $S$, it follows that just before
$C_{j-1}^i$ a successful \SC\ on $S$ is
executed. Let $SC_{j}^i$ be this \SC\ instruction 
and let $LL_{j}^i$ be its matching \LL\ instruction.
Let $T_j^i$ be the read of $Toggles$ that is executed between
$LL_j^i$ and $SC_j^i$ by the same thread.

\begin{figure}[!t]
\begin{center}
\includegraphics[width=0.99\textwidth]{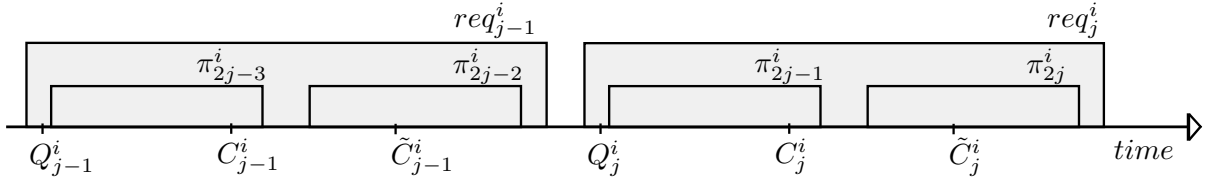}
\end{center}
\vspace{-0.2in}
\caption{An example of an execution of \LSIMOPT.}
\label{fig:lsimopt_execution}
\end{figure}

Lemma~\ref{lsimopt:lemma:T_follows_Q} states that $T_j^i$ 
is performed at the proper timing and returns the anticipated value. 

\begin{lemma}
\label{lsimopt:lemma:T_follows_Q}
Consider any $j$, $0 < j \leq m_i$, it holds that $T_j^i$ is executed 
after $Q_{j}^i$ and reads $j\mod 2$ in $Toggles[i]$.
\end{lemma}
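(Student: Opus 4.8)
The plan is to prove the two assertions of the lemma in turn: first that $T_j^i$ reads the value $j \bmod 2$ in $Toggles[i]$, and then that $T_j^i$ is executed after $Q_j^i$. The first assertion will follow almost immediately from the definition of $C_j^i$ (and of $SC_j^i$) together with the code of \ATTEMPT, so I expect the timing claim to be the real crux. For the latter I will combine Lemma~\ref{lsimopt:col:toggles_intervals}, which pins down the value of $Toggles[i]$ on each interval $[Q_l^i, Q_{l+1}^i)$, with Lemma~\ref{lsimopt:lemma:two_sc} and the \LL/\SC\ semantics in order to exclude the possibility that $T_j^i$ occurs too early (an earlier interval of the same parity).

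For the first assertion, recall that $SC_j^i$ is the successful \SC\ on $S$ that creates $C_j^i$, i.e.\ the configuration at which $S.applied[i]$ first attains the value $j \bmod 2$ (in the relevant range), and that $LL_j^i$ and $T_j^i$ are the matching \LL\ and the read of $Toggles$ in the same for-loop iteration of \ATTEMPT. Let $p$ be the thread executing this iteration. On line~\ref{alg:lsimopt:s_applied}, $p$ sets $tmp.applied := lact$, where $lact$ is precisely the value read from $Toggles$ by $T_j^i$ (line~\ref{alg:lsimopt:read_toggles}). Since $SC_j^i$ writes $tmp$ into $S$, it sets $S.applied[i]$ to the $i$-th bit of $lact$, i.e.\ to the value read by $T_j^i$ in $Toggles[i]$. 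By the choice of $C_j^i$, $S.applied[i] = j \bmod 2$ just after $SC_j^i$, and therefore $T_j^i$ reads $j \bmod 2$ in $Toggles[i]$.

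For the second assertion I use the first together with Lemma~\ref{lsimopt:col:toggles_intervals}, which gives $Toggles[i] = l \bmod 2$ throughout $[Q_l^i, Q_{l+1}^i)$ and ensures that only $p_i$ modifies this bit. In particular $Toggles[i] = (j-1)\bmod 2$ on $[Q_{j-1}^i, Q_j^i)$, so the fact that $T_j^i$ reads $j\bmod 2$ already rules out $T_j^i \in [Q_{j-1}^i, Q_j^i)$; it therefore suffices to show $T_j^i \geq Q_{j-1}^i$. For $j=1$ this is immediate, since $Toggles[i]=0$ before $Q_1^i$ while $T_1^i$ reads $1$. For $j \geq 2$, suppose toward a contradiction that $LL_j^i$ — and hence $T_j^i$, which follows it in program order — is executed before $Q_{j-1}^i$. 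The instance $\pi_{2j-2}^i$ lies entirely after $Q_{j-1}^i$, because it is the second \ATTEMPT\ of $req_{j-1}^i$, which begins only after the \FAD\ that yields $Q_{j-1}^i$; and it lies entirely before $SC_j^i$, because (by Lemma~\ref{lsimopt:lemma:end_of_attempt}) $S.applied[i]$ still equals $(j-1)\bmod 2$ right after $\pi_{2j-2}^i$, so $C_j^i$, and thus $SC_j^i$, strictly follows the end of $\pi_{2j-2}^i$. By Lemma~\ref{lsimopt:lemma:two_sc}, $\pi_{2j-2}^i$ contains a successful \SC\ on $S$, and this \SC\ then lies strictly between $LL_j^i$ and $SC_j^i$ — contradicting the success of $SC_j^i$, since no successful \SC\ on $S$ may intervene between a matching \LL\ and a successful \SC. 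Hence $T_j^i \geq Q_{j-1}^i$, and combined with $T_j^i \notin [Q_{j-1}^i, Q_j^i)$ we conclude $T_j^i \geq Q_j^i$.

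The main obstacle is exactly this timing step: reading the correct value localizes $T_j^i$ only up to parity, so a priori it could fall in any earlier interval $[Q_{j-2m}^i, Q_{j-2m+1}^i)$ of matching parity. The decisive observation is that if $T_j^i$ were too early, then the entire intervening instance $\pi_{2j-2}^i$ would sit inside the window between $LL_j^i$ and $SC_j^i$, and Lemma~\ref{lsimopt:lemma:two_sc} forces a successful \SC\ there, breaking $SC_j^i$. The points demanding care are confirming the relative order of $Q_{j-1}^i$, $\pi_{2j-2}^i$, and $SC_j^i$ (in particular that $SC_j^i$ strictly follows the end of $\pi_{2j-2}^i$, which rests on Lemma~\ref{lsimopt:lemma:end_of_attempt}) and handling the base case $j=1$, where no such preceding instance exists.
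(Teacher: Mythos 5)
Your proof is correct and follows essentially the same strategy as the paper's: it excludes $T_j^i \in [Q_{j-1}^i, Q_j^i)$ via the value of $Toggles[i]$ pinned down by Lemma~\ref{lsimopt:col:toggles_intervals}, and excludes $T_j^i$ preceding $Q_{j-1}^i$ by locating a whole \ATTEMPT\ instance of $req_{j-1}^i$ between $LL_j^i$ and $SC_j^i$ and invoking Lemma~\ref{lsimopt:lemma:two_sc} to fail $SC_j^i$. The only (immaterial) differences are that you establish the ``reads $j \bmod 2$'' claim up front rather than inside the contradiction, and you use $\pi_{2j-2}^i$ where the paper uses $\pi_{2j-3}^i$; both instances serve equally well.
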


\begin{proof}
Assume, by the way of contradiction, that $T_j^i$ is executed before $Q_j^i$.
Let $\pi_x$ be the \ATTEMPT\ that executes $T_j^i$.

Assume first that $j = 1$.
Then, by its definition, $SC_1^i$ (which is executed by $\pi_x$ after $T_1^i$)
writes to $S \rightarrow applied[i]$ a value equal to $\lceil j/2 \rceil \mod 2$;
the code (lines \ref{alg:lsimopt:read_toggles}, \ref{alg:lsimopt:s_applied}) implies that, 
in this case, $T_1^i$ reads $1$ in $Toggles[i]$.
Lemma \ref{lsimopt:col:toggles_intervals} implies that $Toggles[i] = 0$ between $C_0$
and $Q_1^i$. Thus, $T_1^i$ could not read $1$ in $Toggles[i]$, which is a contradiction.

Assume now that $j > 1$.
By our assumption that $T_j^i$ is executed before $Q_j^i$, it follows that $LL_j^i$, which is executed
before $T_j^i$, precedes $Q_j^i$. In case that $T_j^i$ follows $Q_{j-1}^i$, Lemma
\ref{lsimopt:col:toggles_intervals} implies that $T_j^i$ reads 
$(j-1) \mod 2 \neq j \mod 2$ in $Toggles[i]$.
By the pseudocode (lines \ref{alg:lsimopt:read_toggles}, \ref{alg:lsimopt:s_applied} and
\ref{alg:lsimopt:sc_on_s}), it follows that $\pi_x$ writes the value $(j-1)\mod 2$
into $S.applied[i]$. By its definition, $SC_j^i$ stores $j\mod 2$
into $S.applied[i]$, which is a contradiction.
Thus, $T_j^i$ is executed before $Q_{j-1}^i$. By its definition, $\pi_{2j-3}^i$ starts
its execution after $Q_{j-1}^i$ and finishes its execution before $C_j^i$.
Lemma~\ref{lsimopt:lemma:two_sc} implies that at least two successful
\SC\ instructions are executed in the execution interval of $\pi_{2j-3}^i$.
Recall that $LL_j^i$ precedes $T_j^i$ and therefore also the beginning of $\pi_{2j-3}^i$,
while by definition $SC_j^i$ follows the end of $\pi_{2j-3}^i$. It follows that $SC_j^i$
is not a successful \SC\ instruction, which is a contraction.
\end{proof}

We next argue that, between certain configurations (namely  $C_{j-1}^i$ and $C_{j}^i$), 
the value of $S.applied[i]$ has the anticipated value and this value does not change
in the execution interval defined by the two configurations. 

\begin{lemma}
\label{lsimopt:lemma:applied_remains_unchanged}
Consider any $j$, $0 < j \leq m_i$. At each configuration
$C$ between $C_{j-1}^i$ and $C_{j}^i$,
it holds that  $S.applied[i] = (j-1)\mod 2$.
\end{lemma}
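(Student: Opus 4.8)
The plan is to reduce the statement to a claim about individual successful \SC\ instructions on $S$. Note first that at $C_j^i$ we have $S.applied[i] = j \mod 2 \neq (j-1)\mod 2$, so I read ``between $C_{j-1}^i$ and $C_j^i$'' as the interval that starts at $C_{j-1}^i$ and stops just before $C_j^i$. Since $S.applied[i]$ is a single bit that is modified only by successful \SC\ instructions on $S$ (line~\ref{alg:lsimopt:sc_on_s}), and since $S.applied[i] = (j-1)\mod 2$ already holds at $C_{j-1}^i$, it suffices to prove that no successful \SC\ that takes effect strictly between $C_{j-1}^i$ and $C_j^i$ writes the value $j\mod 2$ into $S.applied[i]$. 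I would argue this by contradiction, handling $j=1$ separately: for $j=1$, such an \SC\ would produce a configuration strictly between $C_0$ and $C_1^i$ at which $S.applied[i]=1$, directly contradicting the choice of $C_1^i$ as the \emph{first} such configuration.

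For $j>1$, I would suppose that some successful \SC, say $SC^*$ executed inside an instance of \ATTEMPT, takes effect at a configuration $C^*$ with $C_{j-1}^i < C^* < C_j^i$ and writes $S.applied[i]=j\mod 2$. Let $LL^*$ be its matching \LL\ on $S$ (line~\ref{alg:lsimopt:ll_iteration}) and $T^*$ the read of $Toggles$ performed in the same iteration (line~\ref{alg:lsimopt:read_toggles}), so that $LL^* < T^* < SC^*$; by lines~\ref{alg:lsimopt:read_toggles} and~\ref{alg:lsimopt:s_applied}, the fact that $SC^*$ writes $j\mod 2$ into $S.applied[i]$ forces $T^*$ to have read $Toggles[i]=j\mod 2$. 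The first key step is to locate $T^*$ in time. Using the minimality built into the definition of $C_j^i$ (the first configuration from the end of $\pi_{2j-2}^i$ to the end of $\pi_{2j-1}^i$ with $S.applied[i]=j\mod 2$), the configuration $C^*<C_j^i$, which also carries the value $j\mod 2$, cannot lie at or after the end of $\pi_{2j-2}^i$; hence $C^*$ precedes the end of $\pi_{2j-2}^i$. Since the \FAD\ of $req_j^i$ (line~\ref{alg:lsimopt:first_add}) is executed only after $req_{j-1}^i$ returns, the end of $\pi_{2j-2}^i$ precedes $Q_j^i$, and therefore $T^* < C^* < Q_j^i$.

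The second step converts this timing bound into a statement about the value read. By Lemma~\ref{lsimopt:col:toggles_intervals}, $Toggles[i]$ equals $(j-1)\mod 2$ at every configuration in the interval $[Q_{j-1}^i, Q_j^i)$. Since $T^*$ read $j\mod 2 \neq (j-1)\mod 2$ and $T^* < Q_j^i$, the read $T^*$ must lie before $Q_{j-1}^i$, so $LL^* < T^* < Q_{j-1}^i$. The third step places $SC_{j-1}^i$, the successful \SC\ that created $C_{j-1}^i$, strictly after $LL^*$: applying Lemma~\ref{lsimopt:lemma:T_follows_Q} with index $j-1$, its matching $Toggles$-read $T_{j-1}^i$ is executed after $Q_{j-1}^i$, whence $SC_{j-1}^i > T_{j-1}^i > Q_{j-1}^i > T^* > LL^*$; moreover $SC_{j-1}^i$ takes effect at $C_{j-1}^i < C^*$, so $SC_{j-1}^i < SC^*$. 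Thus $SC_{j-1}^i$ is a successful \SC\ on $S$ occurring strictly between $LL^*$ and $SC^*$. By the semantics of \LL/\SC, this means $S$ changed after $LL^*$ and before $SC^*$, so $SC^*$ could not have succeeded, which is the desired contradiction.

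\textbf{Main obstacle.} I expect the delicate point to be the first step, namely forcing $T^*$ before $Q_j^i$. One cannot simply invoke $C^* < C_j^i$, because $C_j^i$ may itself lie after $Q_j^i$ (the \SC\ that creates $C_j^i$ has its $Toggles$-read after $Q_j^i$ by Lemma~\ref{lsimopt:lemma:T_follows_Q}), so $T^*$ is not \emph{a priori} confined to an interval where $Toggles[i]=(j-1)\mod 2$. The combination that resolves this is the minimality in the definition of $C_j^i$ together with the scheduling fact that the end of $\pi_{2j-2}^i$ precedes the \FAD\ of $req_j^i$; only after $T^*$ is pushed before $Q_{j-1}^i$ does Lemma~\ref{lsimopt:col:toggles_intervals} apply, and the atomicity of \LL/\SC\ then closes the argument.
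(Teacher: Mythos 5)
Your proof is correct and follows essentially the same route as the paper's: assume a first offending configuration, identify the successful \SC\ $SC^*$ that wrote the wrong bit and its $Toggles$-read $T^*$, use Lemma~\ref{lsimopt:col:toggles_intervals} to push $T^*$ (and hence $LL^*$) before $Q_{j-1}^i$, and conclude that $SC_{j-1}^i$ intervenes between $LL^*$ and $SC^*$, so $SC^*$ must fail. The only difference is that you explicitly justify (via the minimality of $C_j^i$ and the fact that $\pi_{2j-2}^i$ ends before $Q_j^i$) why $T^*$ precedes $Q_j^i$ --- a step the paper's proof asserts without argument --- and you treat $j=1$ separately.
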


\begin{proof}
Assume, by the way of contradiction, that there is at least one configuration
between $C_{j-1}^i$ and $C_j^i$ such that $S \rightarrow applied[i]$ 
is equal to some value $v_x \neq (j-1)\mod 2$. Let $C_x$ be the first of these configurations.
Since only \SC\ instructions of line~\ref{alg:lsimopt:sc_on_s} write on base object $S$, 
it follows that there is a successful \SC\ instruction, let it be $SC_x$, executed just 
before $C_x$ that stores $v_x$ at $S.applied[i]$.
Let $\pi_x$ be the \ATTEMPT\ that executes $SC_x$ and let $T_x$ be the read
instruction that $\pi_x$ executes on line~\ref{alg:lsimopt:read_toggles} of the pseudocode.
By the definition of $C_{j-1}^i$ and $Q_{j-1}^i$, it is implied that $C_{j-1}^i$ follows 
$Q_{j-1}^i$ and precedes $Q_j^i$. 
Lemma~\ref{lsimopt:col:toggles_intervals} implies that $Toggles[i] = (j-1)\mod 2 \neq v_x$
in any configuration between $Q_{j-1}^i$ and $Q_j^i$.
Since $SC_x$ writes $v_x$ into $S.applied[i]$, the pseudocode (lines~\ref{alg:lsimopt:read_toggles} 
and~\ref{alg:lsimopt:sc_on_s}) imply that $T_x$ precedes $Q_{j-1}^i$.
It follows that $LL_x$ precedes $Q_{j-1}$, since $LL_x$ precedes $T_x$. Therefore $LL_x$ precedes $C_{j-1}$. 
This implies that there is a successful \SC\ instruction, which is $SC_{j-1}^i$,
between $LL_x$ and $SC_x$. Thus, $SC_x$ is a failed \SC\ instruction, which is a contradiction.
\end{proof}

By Lemma~\ref{lsimopt:lemma:applied_remains_unchanged} and the pseudocode 
(line \ref{alg:lsimopt:s_papplied}), it follows that $S.papplied[i] = 1 - (j\mod 2)$ at $C_{j}^i$.
Denote by $\tilde C_j^i$ the first configuration after $C_j^i$ such that a successful 
\SC\ instruction is executed. 

The next lemma studies properties of $\tilde C_j^i$.

\begin{lemma}
\label{lsimopt:lemma:position_of_tilde_C_j^i}
$\tilde C_{j-1}^i$ precedes $C_j^i$ and follows $C_{j-1}^i$. 
\end{lemma}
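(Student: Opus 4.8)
The statement splits into two claims. The second, that $\tilde C_{j-1}^i$ follows $C_{j-1}^i$, is immediate: $\tilde C_{j-1}^i$ is \emph{defined} to be the first configuration after $C_{j-1}^i$ at which a successful \SC\ is executed, so it follows $C_{j-1}^i$ by construction. All the work therefore goes into the first claim, that $\tilde C_{j-1}^i$ precedes $C_j^i$. My plan is to exhibit a successful \SC\ executed strictly between $C_{j-1}^i$ and $C_j^i$; since $\tilde C_{j-1}^i$ is by definition the \emph{first} successful \SC\ after $C_{j-1}^i$, it must then occur no later than that \SC, hence strictly before $C_j^i$. I will carry out this argument for $j \geq 2$; the boundary case $j=1$, where $C_{j-1}^i$ collapses to $C_0$, reduces to the observation that $C_1^i$ is itself produced by a successful \SC\ after $C_0$, so $\tilde C_0^i$ cannot come after it.

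The successful \SC\ I intend to use is supplied by $\pi_{2j-2}^i$, the second \ATTEMPT\ that $p_i$ runs on behalf of request $req_{j-1}^i$. The first and main step is to show that the entire execution interval of $\pi_{2j-2}^i$ sits strictly inside the interval bounded by $C_{j-1}^i$ and $C_j^i$. For the lower bound I use the definition of $C_{j-1}^i$ as a configuration lying no later than the end of $\pi_{2j-3}^i$; because $p_i$ invokes $\pi_{2j-2}^i$ immediately after $\pi_{2j-3}^i$ returns (they are the two consecutive \ATTEMPT\ calls of a single \APPLYOP), the start of $\pi_{2j-2}^i$, and hence every \SC\ executed inside it, follows $C_{j-1}^i$. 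For the upper bound I use the definition of $C_j^i$ as the first configuration at which $S.applied[i] = j \mod 2$ within the window that begins at the end of $\pi_{2j-2}^i$; to turn ``at or after the end of $\pi_{2j-2}^i$'' into a strict inequality I appeal to Lemma~\ref{lsimopt:lemma:end_of_attempt}, which gives $S.applied[i] = \lceil (2j-2)/2 \rceil \mod 2 = (j-1)\mod 2$ at the end of $\pi_{2j-2}^i$, a value different from $j \mod 2$. Hence $C_j^i$ strictly follows the end of $\pi_{2j-2}^i$, so every \SC\ of $\pi_{2j-2}^i$ precedes $C_j^i$.

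Having located $\pi_{2j-2}^i$ strictly inside the span from $C_{j-1}^i$ to $C_j^i$, the second step is to invoke Lemma~\ref{lsimopt:lemma:two_sc}, which supplies at least one (indeed two) successful \SC\ in the execution interval of $\pi_{2j-2}^i$. Any such \SC\ is then a successful \SC\ executed strictly after $C_{j-1}^i$ and strictly before $C_j^i$, and by the reduction in the first paragraph this forces $\tilde C_{j-1}^i$ to precede $C_j^i$, finishing the argument. Note that this \SC\ need not be performed by $p_i$; Lemma~\ref{lsimopt:lemma:two_sc} only places it in the time interval, which is all the argument requires.

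I expect the main obstacle to be the endpoint bookkeeping of the middle step rather than any deep idea. The definitions pin $C_{j-1}^i$ and $C_j^i$ only inside windows delimited by \ATTEMPT\ endpoints, so I must argue on the correct (strict) side of each configuration. The upper endpoint is the genuinely delicate one: without Lemma~\ref{lsimopt:lemma:end_of_attempt} I would only get that $C_j^i$ occurs at or after the end of $\pi_{2j-2}^i$, leaving open the degenerate possibility $\tilde C_{j-1}^i = C_j^i$; it is exactly the parity computation $\lceil (2j-2)/2 \rceil \mod 2 \neq j \mod 2$ that rules this out. The lower endpoint, in contrast, is routine, following directly from the consecutiveness of $p_i$'s two \ATTEMPT\ calls for $req_{j-1}^i$.
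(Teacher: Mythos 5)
Your proof is correct and follows essentially the same route as the paper's: sandwich the execution interval of $\pi_{2j-2}^i$ strictly between $C_{j-1}^i$ and $C_j^i$, then invoke Lemma~\ref{lsimopt:lemma:two_sc} to obtain a successful \SC\ in that interval, which forces $\tilde C_{j-1}^i$ to precede $C_j^i$. The only (minor) divergence is at the upper endpoint: the paper derives ``$C_j^i$ follows the end of $\pi_{2j-2}^i$'' via Lemma~\ref{lsimopt:lemma:T_follows_Q} and the configuration $Q_j^i$, whereas you obtain it directly from the definitional window of $C_j^i$ together with the parity computation of Lemma~\ref{lsimopt:lemma:end_of_attempt} --- both are sound.
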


\remove{
\begin{proof}
By the definition of $\tilde C_{j-1}^i$, it is implied that $\tilde C_{j-1}^i$ follows $C_{j-1}^i$.
Lemma~\ref{lsimopt:lemma:T_follows_Q} implies that $C_j^i$ follows $Q_j^i$.
By its definition, $Q_j^i$ follows $\pi_{2j-2}$.
By Lemma~\ref{lsimopt:lemma:end_of_attempt}, it follows that $C_{j-1}^i$ precedes the end of 
$\pi_{2j-3}$. Thus, $\pi_{2j-2}^i$ begins its execution after $C_{j-1}^i$ and ends its
execution before $C_j^i$. 
By Lemma~\ref{lsimopt:lemma:two_sc}, there are at least two successful
\SC\ instructions in the execution interval of $\pi_{2j-2}^i$. 
Since, $\tilde C_{j-1}^i$ is the first successful \SC\ just after $C_{j-1}^i$,
it follows that $\tilde C_{j-1}^i$ precedes the end of $\pi_{2j-2}^i$.
Therefore, $\tilde C_{j-1}^i$ precedes $C_j^i$.
\end{proof}
}

We next argue that the $applied$ and $papplied$ arrays of $S$ indicate
that $p_i$ does not have a pending request between $\tilde C_{j-1}^i$ and 
$C_j^i$.

\begin{lemma}
\label{lsimopt:lemma:relation_applied_papplied}
$S.papplied[i] = S.applied[i]$ in any configuration between $\tilde C_{j-1}^i$ and 
$C_j^i$ ($C_j^i$ is not included).
\end{lemma}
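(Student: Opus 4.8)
The plan is to track the value of $S.papplied[i]$ across the successful \SC\ instructions that take effect in the interval $[\tilde C_{j-1}^i, C_j^i)$, and to show that each of them writes $(j-1) \mod 2$ into this field. The starting observation is structural: $S.papplied[i]$ is modified only by a successful \SC\ on line~\ref{alg:lsimopt:sc_on_s}, and such an \SC\ copies into $tmp.papplied[i]$ exactly the value of $S.applied[i]$ that its matching \LL\ read from $S$ (line~\ref{alg:lsimopt:s_papplied}). Hence the whole question reduces to identifying, for each relevant successful \SC, the value of $S.applied[i]$ seen by its \LL. The first thing I would record is that $\tilde C_{j-1}^i$ follows $C_{j-1}^i$ (by its definition as the first successful \SC\ after $C_{j-1}^i$) and precedes $C_j^i$ (Lemma~\ref{lsimopt:lemma:position_of_tilde_C_j^i}), so the interval in question sits inside $[C_{j-1}^i, C_j^i)$.

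Next I would invoke the \LC\ semantics together with the fact that every successful \SC\ strictly increments $S.seq$: a successful \SC\ effective at a configuration $D$ must have its matching \LL\ lie after the immediately preceding successful \SC, say at configuration $D'$, and read precisely the value written at $D'$ (otherwise $S$ would have changed between the \LL\ and the \SC, contradicting its success). The key step is then to locate $D'$ inside the window covered by Lemma~\ref{lsimopt:lemma:applied_remains_unchanged}. For the first successful \SC\ in the interval, namely the one defining $\tilde C_{j-1}^i$, its predecessor is the \SC\ at $C_{j-1}^i$, so $D' = C_{j-1}^i$; this uses exactly that $\tilde C_{j-1}^i$ is the first successful \SC\ after $C_{j-1}^i$, so no successful \SC\ intervenes. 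For any later successful \SC\ in $(\tilde C_{j-1}^i, C_j^i)$, its predecessor $D'$ satisfies $\tilde C_{j-1}^i \leq D' < C_j^i$, and since $\tilde C_{j-1}^i$ follows $C_{j-1}^i$, in all cases $D'$ lies in $[C_{j-1}^i, C_j^i)$. Lemma~\ref{lsimopt:lemma:applied_remains_unchanged} then gives $S.applied[i] = (j-1) \mod 2$ at $D'$, so the matching \LL\ reads $(j-1) \mod 2$ and the \SC\ at $D$ writes $S.papplied[i] = (j-1) \mod 2$.

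Finally, since $S.papplied[i]$ stays constant between consecutive successful \SC\ instructions, the above propagates the value $(j-1) \mod 2$ to every configuration in $[\tilde C_{j-1}^i, C_j^i)$; for the base index $j=1$ the same argument grounds at the initial configuration, where both fields equal \FALSE. Combining this with Lemma~\ref{lsimopt:lemma:applied_remains_unchanged} applied directly to $S.applied[i]$ on the very same interval yields $S.papplied[i] = S.applied[i] = (j-1) \mod 2$, as claimed. The main obstacle is the bookkeeping of the second step: correctly matching each successful \SC\ to its \LL\ via the \LC\ semantics and keeping the predecessor configuration $D'$ within the window of Lemma~\ref{lsimopt:lemma:applied_remains_unchanged}. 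This is also precisely what forces $C_j^i$ to be excluded, since the \SC\ at $C_j^i$ flips $S.applied[i]$ to $j \mod 2$ while leaving $S.papplied[i] = (j-1) \mod 2$, breaking the equality exactly there.
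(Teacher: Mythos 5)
Your proposal is correct and follows essentially the same route as the paper's proof: both reduce the claim to showing that every successful \SC\ taking effect in $[\tilde C_{j-1}^i, C_j^i)$ has its matching \LL\ located inside $[C_{j-1}^i, C_j^i)$ (by \LC\ semantics), so that Lemma~\ref{lsimopt:lemma:applied_remains_unchanged} forces it to read $(j-1)\bmod 2$ in $S.applied[i]$ and hence (line~\ref{alg:lsimopt:s_papplied}) to write $(j-1)\bmod 2$ into $S.papplied[i]$. The only difference is presentational: the paper argues by contradiction on the first violating configuration, whereas you propagate the value forward across all successful \SC\ instructions in the interval.
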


\remove{
\begin{proof}
By Lemma~\ref{lsimopt:lemma:applied_remains_unchanged}, it follows that $S.applied[i] = (j - 1) \mod 2$
between $C_{j-1}^i$ and $C_j^i$.
Assume by the way of contradiction that there at least one configuration between $\tilde C_{j-1}^i$
and $C_j^i$ such that $S.papplied[i] \neq (j - 1) \mod 2$ and let $C_x$ be the first of them.
Let $SC_x$ be the \SC\ instruction executed just before $C_x$ and let $LL_x$ be its matching 
\LL\ instruction. Since, $SC_x$ is a successful \SC\ instruction, $LL_x$ follows $\tilde C_{j-1}^i$.
By Lemma~\ref{lsimopt:lemma:applied_remains_unchanged}, it follows that $S.applied[i] = (j-1) \mod 2$ 
between  $C_{j-1}^i$ and $C_j^i$. Thus, $LL_x$ reads $(j-1)\mod 2)$ in $S.applied[i]$. 
The pseudocode (lines~\ref{alg:lsimopt:ll_iteration} and~\ref{alg:lsimopt:s_papplied}) implies 
that the \SC\ instruction at $\tilde{C}_j^i$ stores a value equal to $(j-1)\mod 2$ into $S.papplied[i]$, 
which is a contradiction.
\end{proof}
}

By Lemma~\ref{lsimopt:lemma:relation_applied_papplied}, and 
by line~\ref{alg:lsimopt:s_papplied}, it follows that $S.papplied[i] = 1 - S.applied[i]$ at $C_j^i$. 
This and the definition of $\tilde C_j^i$ imply:

\begin{lemma}
\label{lsimopt:obs:relation_applied_papplied}
$S.papplied[i] = 1 - S.applied[i]$ in any configuration between $C_j^i$ and $\tilde C_j^i$
	($\tilde C_j^i$ is not included).
\end{lemma}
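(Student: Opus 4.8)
The plan is to lift the single-configuration fact that precedes the statement — namely that $S.papplied[i] = 1 - S.applied[i]$ holds at $C_j^i$ — across the entire half-open interval $[C_j^i, \tilde C_j^i)$, by arguing that the value of $S$ simply does not change inside this interval. The whole argument is a short monotonicity/invariance observation, as the surrounding text already hints (``This and the definition of $\tilde C_j^i$ imply'').

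First I would recall that the only instruction in the pseudocode that writes to the base object $S$ is the \SC\ of line~\ref{alg:lsimopt:sc_on_s}; hence both fields $S.applied[i]$ and $S.papplied[i]$ can be altered only at a step that performs a \emph{successful} \SC\ on $S$. Next I would invoke the definition of $\tilde C_j^i$ as the first configuration after $C_j^i$ at which a successful \SC\ is executed. Consequently, for every configuration $C$ with $C_j^i \le C < \tilde C_j^i$, no successful \SC\ on $S$ has been executed since $C_j^i$, so by the previous observation $S$ retains throughout $[C_j^i, \tilde C_j^i)$ exactly the value it holds at $C_j^i$. Since $S.papplied[i] = 1 - S.applied[i]$ at $C_j^i$ (the fact established just before the statement), the same equality then holds at every configuration of the interval, which is precisely the claim.

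The only point requiring care — and the ``main obstacle,'' modest as it is — is the bookkeeping at the two endpoints. On the left I would note that $C_j^i$ is the configuration immediately following the successful \SC\ $SC_j^i$, so the given equality is anchored at a genuine configuration and no $S$-modifying step lies strictly between $SC_j^i$ and $C_j^i$. On the right I would note that the successful \SC\ defining $\tilde C_j^i$ sits exactly on the boundary and is excluded because $\tilde C_j^i$ itself is excluded, so the open part $(C_j^i, \tilde C_j^i)$ really contains no successful \SC. Existence and well-definedness of $\tilde C_j^i$ follow from the abundance of successful \SC\ instructions guaranteed by Lemma~\ref{lsimopt:lemma:two_sc}. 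Beyond verifying these boundary inclusions, there is no further difficulty: the result is an immediate consequence of the invariance of $S$ over the interval and the already-established equality at $C_j^i$.
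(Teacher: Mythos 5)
Your proof is correct and follows essentially the same route as the paper: the paper likewise derives the lemma directly from the equality $S.papplied[i] = 1 - S.applied[i]$ at $C_j^i$ (obtained from Lemma~\ref{lsimopt:lemma:relation_applied_papplied} and line~\ref{alg:lsimopt:s_papplied}) together with the definition of $\tilde C_j^i$ as the first successful \SC\ on $S$ after $C_j^i$, since $S$ can only change at a successful \SC. Your extra care about the endpoints and the existence of $\tilde C_j^i$ is sound but not a departure from the paper's argument.
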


We continue to define what it means for a process to apply a request on the simulated object.
We say that a request $req$ by some thread $p_i$ {\em is applied} on the simulated object if
(1) the \READ\ instruction on $Toggles$ (line~\ref{alg:lsimopt:read_toggles}),
executed by some request  $req'$ (that might be $req$ or any other request), 
includes $p_i$ in the set of threads it returns,
(2) procedure \ATTEMPT, executed by $req'$ reads in $Announce[i]$, the request type written there by 
$p_i$ for $req$ and considers it as the new request type for $p_i$,
(3) \ATTEMPT\ by $req'$ calls {\tt apply} for $req$ 
(lines \ref{alg:lsimopt:foreach_access} - \ref{alg:lsimopt:calculate_return_value}),
and the execution of the \SC\ at line \ref{alg:lsimopt:sc_on_s} (let it be $SC_r$)
on $S$ succeeds.
When these conditions are satisfied, we sometimes also say that
$req'$ applies $req$ on the simulated object or that $SC_r$ applies $req$ on the simulated object.

\begin{lemma}
\label{lsimopt:lemma:req_applied_at_least_once}
$req_j^i$ is applied to the simulated object at configuration $C_{3j-1}^i$.
\end{lemma}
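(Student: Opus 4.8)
The plan is to exhibit a single successful \SC\ on $S$ that satisfies all three clauses of the definition of ``is applied'' for $req_j^i$, and to identify it as the \SC\ executed at $\tilde C_j^i$, the first successful \SC\ following $C_j^i$. First I would record the picture around $C_j^i$: by the definition of $C_j^i$ and Lemma~\ref{lsimopt:obs:relation_applied_papplied}, at $C_j^i$ we have $S.applied[i]=j\bmod 2$ and $S.papplied[i]=1-(j\bmod 2)$, so $req_j^i$ is pending. Next I would locate $\tilde C_j^i$ in real time: since $C_j^i$ precedes the end of $\pi_{2j-1}^i$, the instance $\pi_{2j}^i$ begins after $C_j^i$ and, by Lemma~\ref{lsimopt:lemma:two_sc}, already contributes two successful \SC's; as $\tilde C_j^i$ is the \emph{first} successful \SC\ after $C_j^i$, it therefore precedes the end of $\pi_{2j}^i$. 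Moreover, since the \SC\ producing $C_j^i$ must read $Toggles[i]=j\bmod 2$, Lemma~\ref{lsimopt:col:toggles_intervals} places $C_j^i$ strictly between $Q_j^i$ and $Q_{j+1}^i$, while the control flow of the algorithm places $Q_{j+1}^i$ after the end of $\pi_{2j}^i$ (as $req_{j+1}^i$ is issued only after $req_j^i$ returns). Hence the whole interval $[C_j^i,\tilde C_j^i)$ lies inside $(Q_j^i,Q_{j+1}^i)$, and, by Lemma~\ref{lsimopt:lemma:applied_remains_unchanged} (indices shifted by one) together with Lemma~\ref{lsimopt:obs:relation_applied_papplied}, both $S.applied[i]$ and $S.papplied[i]$ keep the values $j\bmod 2$ and $1-(j\bmod 2)$ throughout $[C_j^i,\tilde C_j^i)$.

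The crux --- and the step I expect to be the main obstacle --- is to show that the matching \LL\ of the \SC\ at $\tilde C_j^i$ is itself performed after $C_j^i$. Let $\pi$ be the instance of \ATTEMPT\ that executes this \SC\ and let $LL$ be its matching \LL\ of line~\ref{alg:lsimopt:ll_iteration}. Suppose $LL$ preceded $C_j^i$. Because $S.applied[i]$ changes value at $C_j^i$, a successful \SC\ on $S$ is executed immediately before $C_j^i$; this \SC\ is distinct from the one at $\tilde C_j^i$ (the two configurations differ) and lies strictly between $LL$ and $\pi$'s \SC. By the \LL/\SC\ semantics, such an intervening successful store forces $\pi$'s \SC\ to fail, contradicting that $\tilde C_j^i$ is successful. (If that earlier successful \SC\ were $\pi$'s own, then $\pi$'s \SC\ at $\tilde C_j^i$ would be matched by a fresh \LL\ issued after $C_j^i$, again putting $LL$ after $C_j^i$.) Hence $LL$ lies in $(C_j^i,\tilde C_j^i)$.

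With $LL$ so located, the three clauses fall out. For clause~(3), $\pi$ reads $ls.applied[i]=j\bmod 2\neq 1-(j\bmod 2)=ls.papplied[i]$ by the freezing noted above, so the test of line~\ref{alg:lsimopt:if_apply} holds for $q=i$ and $\pi$ runs the code applying $req_j^i$ (lines~\ref{alg:lsimopt:foreach_access}--\ref{alg:lsimopt:calculate_return_value}); the \SC\ at $\tilde C_j^i$ then succeeds by the choice of $\pi$. For clause~(2), $\pi$'s read of $Announce[i]$ follows $LL$ and hence falls in $(Q_j^i,\,\text{end of }\pi_{2j}^i)$: it is after $p_i$ wrote $req_j^i$ at line~\ref{alg:lsimopt:announce_op} (which precedes $Q_j^i$) and before $p_i$ can overwrite $Announce[i]$ with $req_{j+1}^i$ (which happens only after $req_j^i$ returns, i.e.\ after the end of $\pi_{2j}^i$), so $\pi$ reads exactly $req_j^i$. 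For clause~(1), $\pi$'s read of $Toggles$ on line~\ref{alg:lsimopt:read_toggles} also lies in $(Q_j^i,Q_{j+1}^i)$, so Lemma~\ref{lsimopt:col:toggles_intervals} gives $Toggles[i]=j\bmod 2$, the value installed by $p_i$'s $j$-th \FAD, and thus the read of $Toggles$ accounts for $req_j^i$. All three clauses hold, so $req_j^i$ is applied at $\tilde C_j^i$.

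Beyond the \LL/\SC\ consistency argument, the only thing requiring care is the temporal bookkeeping of the last paragraph --- that $\pi$'s reads of $Announce[i]$ and of $Toggles$ genuinely fall inside $(Q_j^i,Q_{j+1}^i)$ and before the overwrite of $Announce[i]$ --- which I would discharge purely from the ordering facts already assembled (Lemmas~\ref{lsimopt:lemma:two_sc}, \ref{lsimopt:col:toggles_intervals}, \ref{lsimopt:lemma:applied_remains_unchanged} and~\ref{lsimopt:obs:relation_applied_papplied}), without introducing new machinery.
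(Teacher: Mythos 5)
Your proof is correct and follows essentially the same route as the paper's: identify the successful \SC\ at $\tilde C_j^i$ as the witness, argue its matching \LL\ of line~\ref{alg:lsimopt:ll_iteration} must follow $C_j^i$ (else an intervening successful \SC\ would make it fail), use Lemma~\ref{lsimopt:obs:relation_applied_papplied} to show the test of line~\ref{alg:lsimopt:if_apply} succeeds for $p_i$, and bound the timing of the reads of $Toggles$ and $Announce[i]$ so that $req_j^i$ is the request picked up. Your treatment is somewhat more explicit than the paper's (notably on why $Announce[i]$ has not yet been overwritten by $req_{j+1}^i$), but it is the same argument.
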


\begin{proof}
Let $p_h$ be the \ATTEMPT\ that executes the successful \SC\ instruction (let it be $SC_h$
this \SC\ instruction) just before $\tilde C_j^i$. Let $LL_h$ be the matching \LL\
of $SC_h$. Since, $SC_h$ is a successful \SC\ instruction, it is implied that $LL_h$
follows $C_j^i$. Observation~\ref{lsimopt:obs:relation_applied_papplied} implies that
$LL_h$ reads for $S.applied[i]$ a value different from that stored in $S.papplied[i]$.
Therefore, the if statement of line~\ref{alg:lsimopt:if_apply} returns \TRUE.
Thus, a request for thread $p_i$ is applied at $\tilde C_j^i$.
Let $req'$ be this request and assume, by the way of contradiction,
that $req' \neq req_j^i$. Lemma \ref{lsimopt:lemma:T_follows_Q} implies that $\pi_h$
executes its read $T_h$ on $Toggles$ after $Q_j^i$. By the pseudocode 
(lines~\ref{alg:lsimopt:read_toggles}, \ref{alg:lsimopt:foreach_access}), 
$\pi_h$ reads $Announce[i]$ after $T_h$, thus the reading of $Announce[i]$ by $\pi_h$ 
is executed between $Q_j^i$ and $\tilde C_j^i$. Since $req_j^i$ writes its
request to $Announce[i]$ before $Q_j^i$, the reading of $Announce[i]$ by
$\pi_h$ returns $req_j^i$. Thus, $\pi_h$ applies $req_j^i$ as the request
of $p_i$ in the simulated object.
\end{proof}

We are now ready to assign linearization points.
For each $i \in \{1, ..., n\}$ and $j \geq 1$, we place the linearization point of
$req_j^i$ at $\tilde C_j^i$; ties are broken by the order
imposed by identifiers of threads.

It is not difficult to argue that the linearization point of each request
is placed in the execution interval of the request.

\begin{lemma}
Each request $req_j^i$ is linearized within its execution interval.
\end{lemma}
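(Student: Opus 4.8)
The plan is to show that the chosen linearization point $\tilde C_j^i$ lies between the invocation and the response of $req_j^i$, by bounding it on both sides in terms of the two instances $\pi_{2j-1}^i$ and $\pi_{2j}^i$ of \ATTEMPT\ that $req_j^i$ executes. Since ties are broken by thread identifier and tie-breaking cannot move a configuration outside the interval, it suffices to establish interval membership of $\tilde C_j^i$ itself.

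For the lower bound, I would argue that $\tilde C_j^i$ follows the invocation of $req_j^i$. By definition $\tilde C_j^i$ is the first successful \SC\ after $C_j^i$, so it suffices to place $C_j^i$ after the invocation. Recall that $C_j^i$ is the configuration immediately following the successful \SC\ $SC_j^i$ (the one executed just before $C_j^i$), and that its matching read $T_j^i$ of $Toggles$ is executed between $LL_j^i$ and $SC_j^i$. Lemma~\ref{lsimopt:lemma:T_follows_Q} gives that $T_j^i$ follows $Q_j^i$, whence $SC_j^i$, and therefore $C_j^i$ and $\tilde C_j^i$, all follow $Q_j^i$. Since $Q_j^i$ is the configuration reached right after $p_i$ executes its $j$-th \FAD\ on line~\ref{alg:lsimopt:first_add}, and that \FAD\ is preceded within $req_j^i$ by the announcement on line~\ref{alg:lsimopt:announce_op}, $Q_j^i$ itself follows the invocation of $req_j^i$. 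Chaining these ``follows'' relations yields that $\tilde C_j^i$ follows the invocation.

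For the upper bound, I would show that $\tilde C_j^i$ precedes the response of $req_j^i$, i.e., precedes the end of the second attempt $\pi_{2j}^i$. By the defining property of $C_j^i$ (it lies before the end of the first attempt $\pi_{2j-1}^i$), $C_j^i$ precedes the start of $\pi_{2j}^i$. Lemma~\ref{lsimopt:lemma:two_sc} guarantees at least one (indeed two) successful \SC\ in the execution interval of $\pi_{2j}^i$; any such \SC\ is a successful \SC\ occurring after $C_j^i$. Since $\tilde C_j^i$ is by definition the first successful \SC\ after $C_j^i$, it must occur no later than this one, hence before the end of $\pi_{2j}^i$, and therefore before $req_j^i$ returns.

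The main obstacle is the upper bound: a priori $\tilde C_j^i$ is only known to be the first successful \SC\ after $C_j^i$, and could conceivably be executed by a helping process long after $p_i$ has returned. The key observation that resolves this is that $req_j^i$ itself runs a second, redundant instance $\pi_{2j}^i$ of \ATTEMPT\ entirely after $C_j^i$, and that by Lemma~\ref{lsimopt:lemma:two_sc} this instance is guaranteed to witness a successful \SC; this is precisely what pins $\tilde C_j^i$ inside $\pi_{2j}^i$'s interval and thus inside $req_j^i$'s execution interval. The lower bound, by contrast, is a routine chaining of the timing relations already supplied by Lemma~\ref{lsimopt:lemma:T_follows_Q}. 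With both bounds in place, membership of $\tilde C_j^i$ in the execution interval of $req_j^i$ follows.
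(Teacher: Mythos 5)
Your proposal is correct and follows essentially the same route as the paper's own proof: the lower bound comes from Lemma~\ref{lsimopt:lemma:T_follows_Q} placing $Q_j^i$ before $C_j^i$ (hence before $\tilde C_j^i$), and the upper bound from the fact that $C_j^i$ precedes the end of $\pi_{2j-1}^i$ while $\pi_{2j}^i$ contains a successful \SC, pinning the first successful \SC\ after $C_j^i$ inside $req_j^i$'s interval. Your version merely makes explicit the appeal to Lemma~\ref{lsimopt:lemma:two_sc} that the paper leaves implicit.
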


\remove{
\begin{proof}
Lemma~\ref{lsimopt:lemma:T_follows_Q} implies that $Q_j^i$
precedes $C_j^i$. Thus $Q_j^i$ precedes $\tilde C_j^i$.
Since $\tilde C_j^i$ is the first successful \SC\ on $S$ after $C_j^i$
and (by its definition and by Lemma~\ref{lsimopt:lemma:end_of_attempt}) 
$C_j^i$ precedes the end of $\pi_{2j-1}^i$,
$\tilde C_j^i$ precedes the end of $\pi_{2j}^i$. Thus, $\tilde C_j^i$ 
is in the execution interval of $req_j^i$. 
Thus, $req_j^i$ is executed in its execution interval.
\end{proof}
}

To prove consistency, denote by $SC_l$ the $l$-th successful $SC$ instruction on base object $S$.
Let $it_i$ be any iteration of the for loop of line~\ref{alg:lsimopt:attempt_loop} 
that is executed by a thread $p_i$. 
Let $SV_r(it_i)$ be the sequence of base objects read by the \LL\ instructions
of line~\ref{alg:lsimopt:ll_dir} in $it_i$. 
Denote by $|SV_r(it_i)|$ the number of elements of $SV_r(it_i)$.

For each $1 \leq j \leq |SV_r(it_i)|$,
denote by $SV_r^j(it_i)$ the prefix of $SV_r(it_i)$ containing the $j$
first elements of $SV_r(it_i)$,
i.e. $SV_r^j(it_i) = \langle sv_r^1(it_i), \dots, sv_r^j(it_i)\rangle$,
where $sv_r^j(it_i)$ is the $j$-th \LL\ instruction performed by $it_i$
on any base object.
Let $SV_r^0(it_i) = \lambda$ be the empty sequence.

Let $V_r(it_i)$ be the sequence of insertions in directory $D$
(lines~\ref{alg:lsimopt:add_dir1}-\ref{alg:lsimopt:add_dir2}) by $it_i$.
Denote by $|V_r(it_i)|$ the number of elements of $V_r(it_i)$.
Obviously, it holds that $|SV_r(it_i)| = |V_r(it_i)|$.
For each $1 \leq j \leq |V_r(it_i)|$,
denote by $v_r^i(it_i)$ the prefix of $V_r(it_i)$ containing the $j$
first elements of $V_r(it_i)$,
i.e. $V_r^j(it_i) = \langle v_r^1(it_i), \dots, v_r^j(it_i)\rangle$,
where $v_j(it_i)$ is the $j$-th value inserted to directory $D$.
Let $V_r^0(it_i) = \lambda$ be the empty sequence.

Let $SV_w(it_i)$ be the sequence of shared base objects 
accessed by $it_i$ while executing lines~\ref{alg:lsimopt:sc_dir1}-\ref{alg:lsimopt:sc_dir2}
(we sometimes abuse notation and say that a code line is executed by $it_i$
to denote that the code line is executed by $p_i$ during the execution of $p_i$).
Denote by $|SV_w(it_i)|$ the number of elements of $SV_w(it_i)$.
For each $1 \leq j \leq |SV_w(it_i)|$,
denote by $SV_w^j(it_i)$ the prefix of $SV_w(it_i)$ that contains the $j$
last elements of $SV_w(it_i)$, i.e. $SV_w^j(it_i) = \langle svw_1(it_i), \dots, svw_j(it_i)\rangle$,
where $svw_j(it_i)$ is the $j$-th request 
(lines~\ref{alg:lsimopt:sc_dir1}-\ref{alg:lsimopt:sc_dir2}) by $it_i$.
Let $SV_w^0(it_i) = \lambda$ be the empty sequence.

Let $SV_a(it_i)$ be the sequence of shared base objects allocations during
$it_i$ iteration (lines~\ref{alg:lsimopt:alloc_var}-\ref{alg:lsimopt:new_var}).
Denote by $|SV_a(it_i)|$ the number of elements of $SV_a(it_i)$.
For each $1 \leq j \leq |SV_a(it_i)|$,
denote by $SV_a^j(it_i)$ the prefix of $SV_a(it_i)$ that contains the $j$
first elements of $SV_a(it_i)$, i.e. $SV_a^j(it_i) = \langle sva_1(it_i), \dots, sva_j(it_i)\rangle$, where $sva_j(it_i)$ is the $j$-th base object allocation 
by $it_i$.

Let $SV_{arw}(it_i)$ be the sequence of allocations/reads/writes  
that $it_i$ performs on base objects
in lines~\ref{alg:lsimopt:alloc_var}-\ref{alg:lsimopt:sc_dir3} 
of the pseudocode. Denote by $|SV_{arw}(it_i)|$ the number of elements of $SV_{arw}(it_i)$.
Obviously, it holds that
$|SV_{arw}(it_i)| = |SV_{a}(it_i)| + |SV_{r}(it_i)| + |SV_{w}(it_i)|$.
For each $1 \leq j \leq |SV_{arw}(it_i)|$, denote by $SV_{arw}^j(it_i)$ 
the prefix of $SV_{arw}(it_i)$ that contains the $j$ first elements of sequence $SV_{arw}(it_i)$
(i.e. $SV_{arw}^j(it_i) = \langle svarw_1(it_i), \dots,$ $svarw_j(it_i)\rangle$)
where $svarw_j(it_i)$ is the $j$-th base object allocations/reads/writes of base objects
performed by $it_i$.

The next lemma states that for any process $p_i$ 
that has a pending request, the $i$-th element 
of the $Announce$ array stores the pending request of $p_i$
for an appropriate time interval.

\begin{lemma}
\label{lemma:L_announce_unmodified}
Let $l>0$ be any integer such that $S.applied[i] \neq S.papplied[i]$ 
at configuration $C_{l-1}$. Let $req_j^i$ be the value of $Announce[i]$ at $C_{l-1}$.
In any configuration between $C_{l-1}$ and $C_l$, it holds that $Announce[i] = req_j^i$.
\end{lemma}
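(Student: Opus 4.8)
The plan is to exploit that $Announce[i]$ is written only by $p_i$ and only at line~\ref{alg:lsimopt:announce_op}, which is the first shared-memory step of every invocation of \APPLYOP\ by $p_i$. Consequently the value $req_j^i$ installed by the announcement of $p_i$'s $j$-th request remains in $Announce[i]$ until $p_i$ executes line~\ref{alg:lsimopt:announce_op} again for its next request $req_{j+1}^i$. Since the consistency argument indexes configurations by successful \SC\ instructions, I take $C_l$ to be the configuration immediately following the $l$-th successful \SC\ $SC_l$ (the end of the $l$-th phase); this is the reading under which the statement is meaningful. It then suffices to prove that $p_i$ does not rewrite $Announce[i]$ anywhere in $[C_{l-1},C_l)$, i.e.\ that $p_i$ does not begin $req_{j+1}^i$ before $C_l$.

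First I would locate $C_{l-1}$ within the lifetime of $req_j^i$. Since $Announce[i]=req_j^i$ and $S.applied[i]\neq S.papplied[i]$ both hold at $C_{l-1}$, the request pending for $p_i$ at $C_{l-1}$ is exactly $req_j^i$ (it cannot be an earlier request, which would already have been applied before $req_j^i$ was announced, nor a later one, which has not yet been announced). By Observation~\ref{lsimopt:obs:relation_applied_papplied} this pending status holds precisely for the configurations in $[C_j^i,\tilde C_j^i)$, so $C_{l-1}\in[C_j^i,\tilde C_j^i)$ and in particular $\tilde C_j^i$ follows $C_{l-1}$. Recall that $S.applied$ and $S.papplied$ are modified only by the successful \SC\ of line~\ref{alg:lsimopt:sc_on_s}; hence $S$ is unchanged throughout $[C_{l-1},C_l)$, the first successful \SC\ at or after $C_{l-1}$ is $SC_l$, and $p_i$ stays pending on all of $[C_{l-1},C_l)$. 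Since $\tilde C_j^i$ is by definition the configuration following some successful \SC\ and strictly follows $C_{l-1}$, it can be no earlier than $C_l$; that is, $C_l$ precedes or coincides with $\tilde C_j^i$.

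It remains to connect the application point $\tilde C_j^i$ with the moment $p_i$ begins $req_{j+1}^i$. By Lemma~\ref{lsimopt:lemma:req_applied_at_least_once}, $req_j^i$ is applied at $\tilde C_j^i$, and by the lemma asserting that each request is linearized within its execution interval, $\tilde C_j^i$ precedes the end of $\pi_{2j}^i$ and hence the return of the \APPLYOP\ invocation carrying $req_j^i$. By program order $p_i$ cannot execute line~\ref{alg:lsimopt:announce_op} for $req_{j+1}^i$ before that return, so the announcement of $req_{j+1}^i$ occurs strictly after $\tilde C_j^i$, hence no earlier than $C_l$. Therefore $Announce[i]$ is not overwritten anywhere in $[C_{l-1},C_l)$ and equals $req_j^i$ at every configuration there, as claimed.

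The main obstacle is the bookkeeping of the second paragraph: matching the single-bit pending indicator $S.applied[i]\neq S.papplied[i]$ at $C_{l-1}$ to the \emph{specific} request $req_j^i$ named in the hypothesis, so that $[C_j^i,\tilde C_j^i)$ is the correct window, and then arguing that no successful \SC\ strictly inside $[C_{l-1},C_l)$ can clear that indicator, i.e.\ that the application of $req_j^i$ cannot be committed before $C_l$. Once the position of $\tilde C_j^i$ relative to $C_l$ is pinned down, the remaining ingredients---that only $p_i$ writes $Announce[i]$ at line~\ref{alg:lsimopt:announce_op}, and the program-order fact that $req_{j+1}^i$ is announced only after $req_j^i$ returns---are routine.
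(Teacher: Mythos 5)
Your proof is correct and follows essentially the same route as the paper's: the paper argues by contradiction that a rewrite of $Announce[i]$ inside the interval would force $\tilde C_j^i$ to precede $C_{l-1}$ and $C_{j+1}^i$ to follow $C_l$, and then invokes Lemma~\ref{lsimopt:lemma:relation_applied_papplied} to contradict $S.applied[i]\neq S.papplied[i]$ at $C_{l-1}$, which is exactly the contrapositive of your direct argument placing $C_{l-1}$ in $[C_j^i,\tilde C_j^i)$ and pushing the announcement of $req_{j+1}^i$ past $C_l$. The ingredients (single-writer property of $Announce[i]$, the pending window characterization via Lemma~\ref{lsimopt:lemma:relation_applied_papplied} and Lemma~\ref{lsimopt:obs:relation_applied_papplied}, and the position of $\tilde C_j^i$ within $req_j^i$'s execution interval) coincide.
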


\remove{
\begin{proof}
Assume, by the way of contradiction, that there is
at least one configuration between $C_{l-1}$ and $C_l$, such that
$Announce[i]=req_{j'}^i \neq req_j^i$. Let $C$ be the first of these
configurations. The pseudocode (line~\ref{alg:lsimopt:announce_op}) implies
that $p_i$ is the only thread that modifies base object $Announce[i]$. 
Thus, $p_i$ starts the execution of a new request $req_{j'}^i$ at $C$,
and it holds that $j' = j + 1$.
Since the write on $Announce[i]$ by $p_i$ is executed between $C_{l-1}$ and $C_l$, 
it is implied that either $C_{j+1}^i = C_l$ or $C_{j+1}^i$ follows $C_l$.
Since the end of $req_j^i$ precedes $C$, it follows that either 
$\tilde C_j^i = C_{l-1}$ or $\tilde C_j^i$ precedes $C_{l-1}$.
Lemma~\ref{lsimopt:lemma:relation_applied_papplied} implies that 
$S.applied[i] = S.papplied$ in any configuration between  $\tilde C_j^i$ 
and $C_{j+1}^i$ ($C_{j+1}^i$ is excluded).
Thus, it holds that $S.applied[i] = S.papplied$ in any configuration
between $C_{l-1}$ and $C_l$, which is contracts our claim that 
$S.applied[i] \neq S.papplied[i]$ at $C_{l-1}$.
\end{proof}
}


\begin{lemma}
\label{lsimopt:lemma:big_lemma}
Let $r$ be any shared base object other than $S$. For any $l > 0$, the following claims are true:
\begin{enumerate}
  \item At most one successful \SC\ instruction is executed on $r$ between
        $C_{l-1}$ and $C_l$. 
  \item In case that a successful \SC\ instruction $SC_w$ is executed on $r$,
        it holds that $r.seq < l$ just before $SC_w$ and $r.seq = l$ just after $SC_w$.
  \item Let $it_i$ be some iteration of the loop of line~\ref{alg:lsimopt:attempt_loop}
        executed by a thread $p_i$ that executes at least one successful
        \SC\ instruction $SC_r$ on $r$. If $LL_r$
        is the \LL\ instruction of line~\ref{alg:lsimopt:ll_iteration}
        executed by $it_i$, then $LL_r$ is executed after $C_{l-1}$.
  \item Let $it_i$, $it_{i'}$ be two iterations of the for loop of
        line~\ref{alg:lsimopt:attempt_loop} executed by threads $p_i$ and $p_{i'}$
        respectively, such that that both $it_i$, $it_{i'}$ execute their \LL\
        instructions of line~\ref{alg:lsimopt:ll_iteration} somewhere between
        $C_{l-1}$ and $C_l$, $l > 0$, and $|SV_{arw}(it_i)| \geq |SV_{arw}(it_{i'})|$. 
        If both $it_i$, $it_{i'}$
        execute line~\ref{alg:lsimopt:flush_dir}, just before $C_l$ it holds 
        that $SV_{arw}(it_i) = SV_{arw}(it_{i'})$.
\end{enumerate}
\end{lemma}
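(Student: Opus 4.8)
The plan is to establish all four claims together by strong induction on $l$, because they are mutually dependent: the $seq$-based bookkeeping of claims~1--3 drives the determinism statement of claim~4, while claim~4 for earlier phases is needed to exclude interference from stragglers when proving claims~1--3 for phase $l$. I would strengthen the induction hypothesis with one auxiliary invariant, namely that at each $C_m$ every base object modified by the phase $m$ batch carries $seq=m$ and holds its post-phase-$m$ value, while every other base object is untouched. The backbone is that $S$ is unchanged strictly between $C_{l-1}$ and $C_l$ (these are consecutive successful executions of \SC\ on $S$, and only a successful \SC\ alters $S$); hence every iteration $it$ whose \LL\ of line~\ref{alg:lsimopt:ll_iteration} falls in $[C_{l-1},C_l)$ reads the \emph{same} image of $S$, the one installed at $C_{l-1}$, with $ls.seq=l-1$ and therefore $tmp.seq=l$. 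I would then record the monotonicity of the $seq$ fields: a successful \SC\ on a base object $r$ at line~\ref{alg:lsimopt:sc_dir2} or~\ref{alg:lsimopt:sc_dir3} is guarded by $svp\rightarrow seq<tmp.seq$ (the tests of lines~\ref{alg:lsimopt:for_break}--\ref{alg:lsimopt:sc_dir1} discard the other cases) and writes $r.seq=tmp.seq$, so $r.seq$ is non-decreasing, each successful write raises it to exactly the writer's phase, and successive writers of $r$ have strictly increasing phases.

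Given this, claim~2 is almost immediate: a successful write on $r$ in $[C_{l-1},C_l)$ has $tmp.seq$ equal to its phase, the guard forces $r.seq<tmp.seq$ before and the write makes $r.seq=tmp.seq$ after, and the identification $tmp.seq=l$ is exactly claim~3 for that iteration. For claim~3 I would rule out a phase $l'\neq l$ iteration performing a successful write on $r$ after $C_{l-1}$; the case $l'>l$ is immediate since then $LL_r$, and hence $SC_r$, would follow $C_l$, so assume $l'<l$. Such an iteration passes the \VL\ of line~\ref{alg:lsimopt:vl} before flushing, and \VL\ success means $S$ was unchanged since its \LL, so the \VL\ precedes $C_{l'}$; its flush write, however, is after $C_{l-1}$ and hence (as $l-1\ge l'$) after $C_{l'}$, so $C_{l'}$ lies strictly between its \LL\ on $r$ (line~\ref{alg:lsimopt:ll_dir}) and the write. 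The auxiliary invariant at $C_{l'}$ now gives a contradiction: if the phase $l'$ batch modified $r$, the successful write that flushed it before $C_{l'}$ either intervenes and breaks the \LL/\SC\ link (failing the write) or precedes the \LL\ on $r$, in which case $r.seq\ge l'$ is read at the guard and the write branch is not taken; if the phase $l'$ batch did not modify $r$, then by claim~4 for phase $l'$ this iteration's directory agrees with the phase $l'$ winner's and contains no pending write on $r$. Claim~1 then follows from claims~2--3 and \LL/\SC\ semantics: the first successful phase $l$ write sets $r.seq=l=tmp.seq$, after which any later phase $l$ attempt reads $svp\rightarrow seq=tmp.seq$ at line~\ref{alg:lsimopt:sc_dir1} and takes the skip branch, so no second successful write occurs.

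The heart of the argument, and the step I expect to be the main obstacle, is claim~4. I would prove it by a nested induction on the position $t$ in the access sequence, showing that the $t$-th base-object action of $it_i$ and of $it_{i'}$ coincide for all $t\le\min\{|SV_{arw}(it_i)|,|SV_{arw}(it_{i'})|\}$. Because both iterations read the $C_{l-1}$ image of $S$ they agree on the pending set (the indices $q$ with $ls.applied[q]\neq ls.papplied[q]$ at line~\ref{alg:lsimopt:if_apply}), and by Lemma~\ref{lemma:L_announce_unmodified} they read the same request type in every $Announce[q]$; since each request is deterministic sequential code, the shapes of their access sequences already match, so it remains to match the value of each action. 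For a read the $seq$ mechanism makes both obtain the phase $(l-1)$ value of the item, taken from $val[toggle]$ while $r.seq<l$ or from the preserved $val[1-toggle]$ once a concurrent phase $l$ writer has set $r.seq=l$, using claims~1--2 to know that an in-phase write carries $seq=l$ and occurs at most once. For an allocation the shared $var\_list$ and the \CAS\ of line~\ref{alg:lsimopt:add_list} force both iterations to adopt the identical \texttt{ItemSV} record, and for a flush write the agreement of all earlier reads plus determinism force identical computed values. The delicate point is that an iteration may be preempted while the winner and the stragglers of the same phase race on the base objects; the prefix/equality formulation is exactly what lets the induction absorb these interleavings, and the care goes into showing that a slow iteration about to be made obsolete either still reads consistent old values through line~\ref{alg:lsimopt:add_dir1} or detects the change at the \VL\ of line~\ref{alg:lsimopt:vl} and abandons the current iteration of the loop of line~\ref{alg:lsimopt:attempt_loop}.
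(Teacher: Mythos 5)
Your proposal is correct and follows essentially the same strategy as the paper: an outer induction on $l$ in which the four claims support one another via the monotonicity of the $seq$ fields, the \VL\ guard of line~\ref{alg:lsimopt:vl}, and the fact that all iterations of a phase read the same image of $S$, with Claim~4 handled by a nested induction on the position in the access sequence using the stability of $Announce$ (Lemma~\ref{lemma:L_announce_unmodified}) and the determinism of the sequential code. The only differences are cosmetic — you derive Claims~1 and~2 from Claim~3 rather than the reverse order, and you make explicit an auxiliary invariant about the state of the base objects at each $C_m$ that the paper leaves implicit.
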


\begin{proof}
We prove the claims by induction on $l$.
Fix any $l \geq 1$ and assume that the claims hold for $l$.
We prove that the claims hold for $l+1$.

We first prove Claim 1. Let $SC'$ be the first of
the successful \SC\ instruction on $r$ between $C_{l-1}$ and $C_l$.
We prove that $r.seq = l$ just after the execution of $SC'$. Assume by the way
of contradiction that $r.seq = l' \neq j$. Let $it_h$ be the iteration of
line~\ref{alg:lsimopt:ll_iteration} executed by some thread $p_h$ that
executes $SC'$. Let $LL'$ be the matching \LL\ instruction of $SC'$.
Since $it_i$ executes successfully line~\ref{alg:lsimopt:sc_dir2} of the pseudocode,
the pseudocode (lines~\ref{alg:lsimopt:vl} and~\ref{alg:lsimopt:sc_dir2}) implies
that the \VL\ instruction of line~\ref{alg:lsimopt:vl} returns \TRUE.
Since $LL'$ is executed by $it_i$ before this \VL\ instruction,
it follows that $LL'$ precedes $SC_{j'}$. Thus, the \VL\ instruction
of line~\ref{alg:lsimopt:vl} is executed before $SC_{j'}$.
Let $it_{i'}$ be the iteration of the loop of line~\ref{alg:lsimopt:ll_iteration}
at which $SC_{j'}$ is executed and let $p_{i'}$ be the thread that
executes $SC_{j'}$. Obviously, $LL_{j'}$ has been executed between
$C_{l'-1}$ and $C_{l'}$. Since $LL'$ is also executed between $C_{l'-1}$
and $C_{l'}$, the induction hypothesis (Claim $2$.ii) implies that
$SV_w(it_h) = SV_w(it_q)$.
Thus, $it_q$ has also executed an \SC\ instruction on $r$.
By lines \ref{alg:lsimopt:ll_dir}, \ref{alg:lsimopt:flush_dir}-\ref{alg:lsimopt:sc_dir2} and 
\ref{alg:lsimopt:sc_on_s} of the pseudocode, it follows that
there is a successful \SC\ instruction on $r$ between $SC_{l'-1}$ and
$SC_{l'}$. Let $SC_r$ be this instruction. By induction hypothesis (claim 1),
it follows that $r.seq=j'$ just after the execution of $SC_r$. Since $SC'$
is a successful \SC\ instruction, $LL'$ follows $SC_r$. By the pseudocode
(lines~\ref{alg:lsimopt:sc_dir1}-\ref{alg:lsimopt:sc_dir2}),
it follows that $SC'$ is not executed, which is a contradiction.
Therefore $r.seq = j$ just after the execution of $SC_r$. We now prove
that there is no other successful \SC\ instruction between $SC'$ and $C_l$ on $r$.
Assume by the way of contradiction that at least one successful \SC\
instruction takes place between $SC'$ and $C_l$. Let $SC''$ be the first
of these instructions. Since, $SC''$ is a successful \SC\ instruction, it
follows that its matching \LL\ instruction $LL''$ follows $SC'$. By the
pseudocode (lines~\ref{alg:lsimopt:sc_dir1}-\ref{alg:lsimopt:sc_dir2}),
it follows that $SC''$ is not executed since $r.seq = S.seq$,
which is a contradiction.

Claim~2 is proved using a similar argument as that above for Claim~1.

We now prove Claim~3. Assume by the way of contradiction that $LL_p$ is executed
between $SC_{j'-1}$ and $SC_{j'}$, $j'<j$. Let $p_i$ be the thread that executes $SC_{j'}$ 
on some iteration $it_i$. By Claim~1 and by Claim~2, it follows that
$r.seq \leq j'$ just before $SC_{j'}$. Thus $SC_r$ is not executed, which is a contradiction.
Thus, Claim~3 holds.

To prove Claim 4, it is enough to prove that $svarw_{l'}(it_i) = svarw_{l'}(it_{i'})$,
for any $l' \leq |SV_{arw}(it_i)|$.
We prove this claim by induction on the number $l' \leq |SV_{arw}|$ of 
elements of $SV_{arw}(it_i)$ (see appendix).
\end{proof}


Denote by $\alpha_i$, the prefix of $\alpha$ which ends at $SC_i$ and let $C_i$
be the first configuration following $SC_i$. Let $\alpha_0$ be the empty
execution. Denote by $l_i$ the linearization order of the requests in $\alpha_i$.

We are now ready to prove that $a_i$ is linearizable. This require to prove
that the object state is consistent after the execution of each successful
\SC\ on $S$.

\begin{lemma}
	\label{lem:linearizability}
For each $i \geq 0$, the following claims hold:
	\begin{enumerate}
		\item object's state is consistent at $C_i$, and 
		\item $\alpha_i$ is linearizable.
	\end{enumerate}
\end{lemma}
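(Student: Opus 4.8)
The plan is to prove both claims simultaneously by induction on $i$, the index of the successful \SC\ on $S$. The base case $i=0$ is immediate: $\alpha_0$ is the empty execution, the object resides in its initial state, so the state is trivially consistent and $\alpha_0$ is vacuously linearizable with $l_0$ empty. For the inductive step I would assume both claims for $i$ and establish them for $i+1$.

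First I would pin down the batch of requests that $SC_{i+1}$ incorporates. Let $it$ be the iteration of some thread $p_h$ that executes $SC_{i+1}$, and let $LL_{i+1}$ be its matching \LL\ of line~\ref{alg:lsimopt:ll_iteration}; by Claim~3 of Lemma~\ref{lsimopt:lemma:big_lemma}, $LL_{i+1}$ is executed after $C_i$. Using the definition of ``is applied'' together with Lemma~\ref{lsimopt:lemma:req_applied_at_least_once} and the placement of linearization points at the $\tilde C$ configurations, I would identify the batch $B_{i+1}$ as exactly the set of requests whose linearization point is $C_{i+1}$, namely those $req_j^{t}$ with $\tilde C_j^{t} = C_{i+1}$, ordered among themselves by thread identifier. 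The applied/papplied machinery (Lemmas~\ref{lsimopt:lemma:relation_applied_papplied} and~\ref{lsimopt:obs:relation_applied_papplied}) guarantees that each such request is pending when $LL_{i+1}$ reads $S$, is detected by the test of line~\ref{alg:lsimopt:if_apply}, and ceases to be pending at $C_{i+1}$, so it is applied exactly once over the whole execution and never re-applied in a later phase.

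Next I would show that the shared state at $C_{i+1}$ equals the state obtained by applying the requests of $B_{i+1}$ sequentially, in thread-identifier order, to the (by hypothesis consistent) state at $C_i$. The crux here is Lemma~\ref{lsimopt:lemma:big_lemma}: Claim~4 forces any two iterations that read $S$ between $C_i$ and $C_{i+1}$ and reach the flush loop of line~\ref{alg:lsimopt:flush_dir} to produce identical sequences of base-object accesses, so the updates written to the shared variables are well-defined independently of which helper performs them; Claims~1 and~2 guarantee that each variable receives at most one successful \SC\ in this phase and that its $seq$ field is advanced to $i+1$. Together with the sequence-number guards of lines~\ref{alg:lsimopt:for_break}-\ref{alg:lsimopt:sc_dir3} and the old/new-value toggle of the $ItemSV$ record, this yields that no update is applied twice, no stale update overwrites a fresh one, and every read performed during the phase returns the value dictated by the order of $B_{i+1}$ on top of the state at $C_i$. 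I would also verify that newly-allocated variables are handled consistently, since the $var\_list$ \CAS\ mechanism of line~\ref{alg:lsimopt:add_list} forces all helpers to share the same $ItemSV$ record for each new variable. Consistency at $C_{i+1}$ then follows from consistency at $C_i$ plus these updates. Finally, because $it$ computes each $tmp.rvals[q]$ (line~\ref{alg:lsimopt:calculate_return_value}) by running the pending requests against a directory that mirrors the state at $C_i$ in the order of $B_{i+1}$, the responses written to $S.rvals$ by $SC_{i+1}$ coincide with those of the sequential execution; setting $l_{i+1}$ to be $l_i$ followed by $B_{i+1}$ and invoking the earlier lemma that each linearization point lies in its request's interval completes the argument that $\alpha_{i+1}$ is linearizable.

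The main obstacle I anticipate is the consistency claim for the shared state at $C_{i+1}$. Establishing it requires simultaneously controlling several sources of interference: multiple helpers flushing the same batch, slow processes whose \SC\ attempts must be neutralized by the $seq$ guards, and concurrent readers that must still observe the correct pre-update values through the $toggle$ bit. All of these are exactly what the four claims of Lemma~\ref{lsimopt:lemma:big_lemma} are engineered to handle, so the real work is to assemble those claims---and the sequence-number invariants they supply---into a single statement that the net effect of phase $i+1$ on every variable is the faithful sequential application of $B_{i+1}$.
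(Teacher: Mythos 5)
Your proposal follows essentially the same route as the paper: induction on the index of the successful \SC\ on $S$, identifying the batch of requests applied in phase $i+1$ via the $applied$/$papplied$ lemmas and Lemma~\ref{lsimopt:lemma:req_applied_at_least_once}, invoking Lemma~\ref{lsimopt:lemma:big_lemma} to show that all helpers read the same consistent state, perform identical base-object accesses, and that no request is applied twice, and then concluding consistency at $C_{i+1}$ and linearizability by appending the batch (ordered by thread identifier) to $l_i$. Your write-up is somewhat more explicit than the paper's sketch about the roles of the $seq$ guards, the toggle bit, and the $var\_list$ mechanism, but the decomposition and the key lemmas used are the same.
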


\begin{proof}
We prove the claim by induction on $i$.
The claim holds trivially; we remark that $\alpha_i$ is empty in this case.
Fix any $i>0$ and assume that the claim holds for  $i-1$.
We prove that the claim holds for $i$.

By the induction hypothesis,
it holds that: (1) object's state is consistent at $C_{i-1}$,
and (2) $\alpha_{i-1}$ is consistent with linearization $l_{i-1}$.
Let $req$ be the request that executes $SC_i$. If $req$ applies no request
on the simulated object, the claim holds by induction hypothesis. Thus,
assume that $req$ applies $j > 0$ requests on the simulated object.
Denote by $req_1, ..., req_j$ the sequence of
these requests ordered with respect to the identifiers of the threads
that initiate them.

Notice that $req$ performs $LL_i$ after $C_{i-1}$ since otherwise
$SC_i$ would not be successful.
By the induction hypothesis, object's is consistent at $C_{i-1}$.
By Lemma \ref{lsimopt:lemma:relation_applied_papplied}, Observation~\ref{lsimopt:obs:relation_applied_papplied},
Lemma~\ref{lsimopt:lemma:req_applied_at_least_once}, and of the definition of $\tilde C_j^i$,
it follows that 
each request $req$ is applied exactly once.
Thus, Lemma~\ref{lsimopt:lemma:big_lemma} 
imply that all threads that are trying to apply a set of requests
between $C_{i-1}$ and $C_{i}$ do the following (1) apply the same set of requests
with the same order, (2) all read the same consistent state of the object,
(3) write the same set of base objects with the same values
(although only one write succeeds), and (4) none of $req_1, \ldots, req_j$
have been applied in the past.

Given that $req_1, ..., req_j$ are executed by $req$ sequentially, the one after
the other in the order mentioned above,
it is a straightforward induction to prove that (1) for each $f$, $0 \leq f \leq j$, request $req_f$
returns a consistent response;
moreover, $S \rightarrow st$ is consistent and once line $14$ has been executed by $req$ for all these requests.
Therefore,  $S \rightarrow st$ is consistent after the execution of $req$'s successful \SC.
This concludes the proof of the claim.
\end{proof}

Lemma~\ref{lem:linearizability} implies that \LSIMOPT\ is linearizable. 
The discussion in Section~\ref{step complexity} implies that
\LSIMOPT\ is also wait-free and its step complexity is $O(n+kw)$.
Thus:

\begin{theorem}
\label{theorem:lsimopt}
\LSIMOPT\ is a linearizable, wait-free implementation of a universal object.
The number of shared memory accesses performed by \LSIMOPT\ is $O(n+kw)$.
\end{theorem}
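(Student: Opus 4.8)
The plan is to assemble the theorem from the machinery already developed, treating its three assertions---linearizability, wait-freedom, and the $O(n+kw)$ step bound---separately. The conceptually hard parts (that in-place, concurrently-applied updates nevertheless yield a consistent shared state, and that each request is applied exactly once) have already been discharged by Lemma~\ref{lsimopt:lemma:big_lemma}, Lemma~\ref{lsimopt:lemma:req_applied_at_least_once}, and Lemma~\ref{lem:linearizability}; what remains is to package these correctly and to argue termination and cost by inspection of the pseudocode of \APPLYOP\ and \ATTEMPT.

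For linearizability I would first recall the assignment of linearization points. By Lemma~\ref{lsimopt:lemma:req_applied_at_least_once} every request $req_j^i$ is applied to the simulated object, and by Lemma~\ref{lsimopt:lemma:relation_applied_papplied}, Observation~\ref{lsimopt:obs:relation_applied_papplied} together with the definition of $\tilde C_j^i$ it is applied \emph{exactly once}, at the configuration $\tilde C_j^i$; the earlier lemma placing $\tilde C_j^i$ in the execution interval of $req_j^i$ then guarantees that each linearization point lies between the invocation and the response of its request. It remains to check that the returned responses agree with the sequential execution taken in linearization order, and this is precisely the content of Lemma~\ref{lem:linearizability}, which shows by induction on the successful \SC\ instructions $SC_1, SC_2, \dots$ on $S$ that the object state is consistent at each $C_i$ and that the prefix $\alpha_i$ is linearizable with order $l_i$. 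Taking the union over all $i$, and noting that a process reads its response from $S.rvals[i]$, which is written by the \SC\ that applies its request, yields a single linearization of the whole execution that is consistent with all responses, i.e.\ linearizability of the universal object.

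For wait-freedom and step complexity I would argue by inspection. Each \APPLYOP\ performs a constant amount of work plus two calls to \ATTEMPT, and each \ATTEMPT\ executes the for loop of line~\ref{alg:lsimopt:attempt_loop} exactly twice; within an iteration the only control transfers are the \texttt{goto}/\texttt{continue} that \emph{abandon} the current iteration (upon detecting, via the sequence numbers or via the \VL\ of line~\ref{alg:lsimopt:vl}, that another process has advanced $S$), so there is no inner busy-waiting loop and every iteration performs a bounded number of steps. Hence each \ATTEMPT---and therefore each \APPLYOP---terminates after finitely many steps, giving wait-freedom; moreover Lemma~\ref{lsimopt:lemma:req_applied_at_least_once} guarantees that after the two attempts $req_j^i$ has been applied, so the value read from $S.rvals[i]$ is the correct response. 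The step count is then read off exactly as in Section~\ref{step complexity}: reading $S$ and $Toggles$ costs $O(n)$ (these records have size $\Theta(n)$); the body of the conditional of line~\ref{alg:lsimopt:if_apply} is entered at most $k$ times, each contributing $O(w)$ accesses through the foreach of line~\ref{alg:lsimopt:foreach_access}; and flushing the directory in the loop of line~\ref{alg:lsimopt:flush_dir} touches the $O(kw)$ cached items once. Summing gives $O(n+kw)$.

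The main obstacle I anticipate is not in any single step but in making the wait-freedom argument airtight, namely ruling out that the \texttt{goto}/\texttt{continue} retries could, across the two fixed iterations, force an unbounded number of base-object accesses. I would close this by observing that each such control transfer strictly terminates the current iteration rather than restarting an inner loop, so the total number of shared accesses in \ATTEMPT\ is bounded by twice the per-iteration cost computed above; the remaining care is purely bookkeeping, since all the deeper consistency and progress properties are already encapsulated in the cited lemmas.
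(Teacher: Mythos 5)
Your proposal is correct and follows essentially the same route as the paper, which derives the theorem by citing Lemma~\ref{lem:linearizability} for linearizability and the step-count analysis of Section~\ref{step complexity} for wait-freedom and the $O(n+kw)$ bound. Your additional elaborations (placement of linearization points within execution intervals, and the observation that every \texttt{goto}/\texttt{continue} abandons rather than restarts an iteration) are consistent with, and merely flesh out, the paper's argument.
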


\end{document}